\documentclass[3p]{elsarticle}


\makeatletter
\def\ps@pprintTitle{%
 \let\@oddhead\@empty
 \let\@evenhead\@empty
 \def\@oddfoot{\centerline{\thepage}}%
 \let\@evenfoot\@oddfoot}
\makeatother

\usepackage{amsthm}
\usepackage{amssymb}
\usepackage{amsfonts}
\usepackage{verbatim}
\usepackage{amsmath}
\usepackage{amsfonts}
\usepackage[english]{babel}
\usepackage{amscd}
\usepackage{mathptmx}
\usepackage{mathrsfs}
\usepackage{color}
\usepackage{xspace}

\theoremstyle{plain} 
\newtheorem{theorem}{Theorem}[section] 
\newtheorem{lem}[theorem]{Lemma}
\newtheorem{prop}[theorem]{Proposition}
\newtheorem{defn}[theorem]{Definition}

\newcommand{\logicFont}[1]{\mathsf{#1}\xspace}

\usepackage[cal=boondox,scr=boondoxo]{mathalfa}

\usepackage{hyperref}

\usepackage{float}

\usepackage{bussproofs}

\usepackage{multirow}
\usepackage{cleveref}

\usepackage{enumerate}
\usepackage{enumitem}

\DeclareSymbolFont{letters}{OML}{cmm}{m}{it}

\DeclareMathAlphabet{\mathcal}{OMS}{cmsy}{m}{n}

\usepackage{relsize}


\newcommand{\FO}{\logicFont{FO}}
\newcommand{\ESO}{\logicFont{ESO}}

\newcommand{\FOT}{\ensuremath{\logicFont{FOT}}\xspace}

\newcommand{\FOPTinc}{\ensuremath{\logicFont{FOPT}(\leq^\delta)}\xspace}
\newcommand{\FOPTinccond}{\ensuremath{\logicFont{FOPT}(\leq^\delta,\cisym\hspace{-.8mm}^\delta_\mathrm{c})}\xspace}
\newcommand{\FOPTcondineq}{\ensuremath{\logicFont{FOPT}(\leq_\mathrm{c}^{\delta})}\xspace}

\newcommand{\FOrxsum}{\ensuremath{\logicFont{FO}_{\mathbb{R}_{\geq 0}}(\times,\SUM)}\xspace}
\newcommand{\FOrsum}{\ensuremath{\logicFont{FO}_{\mathbb{R}_{\geq 0}}(\SUM^*)}\xspace}

\newcommand{\existso}{\ensuremath{\exists^1}\xspace}
\newcommand{\forallo}{\ensuremath{\forall^1}\xspace}
\newcommand{\wcn}{\ensuremath{\mathop{\dot\sim}}\xspace}

\newcommand{\dblsetminus}{\mathbin{{\setminus}\mspace{-5mu}{\setminus}}}
\newcommand{\vvee}{\raisebox{1pt}{\ensuremath{\,\mathop{\mathsmaller{\mathsmaller{\dblsetminus\hspace{-0.23ex}/}}}\,}}}


\newcommand{\supp}{\ensuremath{\textnormal{supp}}\xspace}
\newcommand{\distr}{\ensuremath{\textnormal{distr}}\xspace}\newcommand{\ddfn}{::=}

\newcommand{\ci}[3]{#2~\bot_{#1}~#3}

\newcommand{\pci}[3]{{#2 \perp\!\!\!\!\perp_{#1} #3}}
\newcommand{\cisym}{\perp\!\!\!\!\perp}

\newcommand{\Dom}{\mathrm{Dom}}

\newcommand{\SUM}{\mathrm{SUM}}

\newcommand{\X}{\mathbb{X}}

\newcommand{\mA}{\mathcal{A}}

\newcommand{\Var}{\mathrm{Var}}

\newcommand{\ar}{\mathrm{ar}}


\usepackage{hyperref}

\makeatletter
\providecommand{\doi}[1]{%
  \begingroup
    \let\bibinfo\@secondoftwo
    \urlstyle{rm}%
    \href{http://dx.doi.org/#1}{%
      doi:\discretionary{}{}{}%
      \nolinkurl{#1}%
    }%
  \endgroup
}
\makeatother


\makeatletter
\def\@author#1{\g@addto@macro\elsauthors{\normalsize%
    \def\baselinestretch{1}%
    \upshape\authorsep#1\unskip\textsuperscript{%
      \ifx\@fnmark\@empty\else\unskip\sep\@fnmark\let\sep=,\fi
      \ifx\@corref\@empty\else\unskip\sep\@corref\let\sep=,\fi
      }%
    \def\authorsep{\unskip,\space}%
    \global\let\@fnmark\@empty
    \global\let\@corref\@empty  
    \global\let\sep\@empty}%
    \@eadauthor={#1}
}
\makeatother


\begin{document}

\begin{frontmatter}{}

\author{Miika Hannula\fnref{fn1}}
\author{Minna Hirvonen\corref{cor1}\fnref{fn2}}
\ead{minna.hirvonen@helsinki.fi}
\author{Juha Kontinen\fnref{fn1}}
\address{Department of Mathematics and Statistics, University of Helsinki, Helsinki, Finland}

\cortext[cor1]{Corresponding author}
\fntext[fn1]{Supported by  grant 308712 of the Academy of Finland.}
\fntext[fn2]{Supported by the Vilho, Yrj\"o and Kalle V\"ais\"al\"a Foundation.} 
\title{On elementary logics for quantitative dependencies}

\begin{abstract}
We define and study  logics in the  framework of probabilistic team semantics and over metafinite structures. Our work is paralleled by the recent development of novel axiomatizable and tractable logics in team semantics that are closed under the Boolean negation. Our logics employ new  probabilistic atoms that resemble so-called extended atoms from the team semantics literature. We also define counterparts of our logics over metafinite structures and show that all of our logics can be translated into functional fixed point logic implying a polynomial time upper bound for data complexity with respect to BSS-computations.
\end{abstract}

\begin{keyword}
probabilistic team semantics \sep dependence logic \sep conditional independence \sep metafinite structure
\end{keyword}

\end{frontmatter}{}

\section{Introduction}


In this article we define new logics for the framework of \emph{probabilistic team semantics}. Our work is motivated and paralleled by the recent development of novel axiomatizable and tractable logics in team semantics that are closed under the Boolean negation but remain on the first-order level for expressivity. Our logics employ new probabilistic atoms that resemble so-called extended atoms from the team semantics literature. Unlike, e.g., with extended dependence atoms which are definable by usual dependence atoms, the new extended quantitative dependencies are a crucial feature of the new logics. We also define counterparts of our logics over metafinite structures and show that all of our logics can be translated into functional fixed point logic giving a deterministic polynomial-time upper bound for data complexity with respect to BSS-computations.

Team semantics is a semantical framework originally introduced by Hodges \cite{Hodges1997a}   and  V\"a\"an\"anen with the introduction of {\em dependence logic} \cite{Van07dl}. Soon after the introduction of dependence logic, the focus in (first-order) team semantics turned to independence logic  and inclusion logic that were introduced in \cite{D_Ind_GV,Pietro_I/E}.  During the past decade  research on logics in team semantics has flourished with interesting connections to many fields such as database theory  \cite{HannulaK16}, statistics \cite{CoranderHKPV16}, and temporal hyperproperties \cite{kmvz18}.

In team semantics formulas are evaluated over  sets of assignments (called {\em teams}) rather than single assignments as in first-order logic. This feature has the effect that knowing the expressive power of a logic for sentences does not immediately give a characterization for the expressive power of the open formulas of the logic. For example, while it follows  from the earlier results of \cite{henkin61,Enderton1970,Walkoe1970} that dependence logic  and independence logic  are both equivalent to existential second-order logic ($\ESO$) on the level of sentences, the open formulas of dependence logic are strictly less expressive compared to  independence logic: The latter characterizes all $\ESO$-definable team properties \cite{Pietro_I/E}, whereas the former only downward closed $\ESO$-definable properties \cite{KontVan09}. 

A salient feature of (most) logics in team semantics is that their expressive power exceeds that of first-order logic.  Only recently  a  team-based logic  \FOT was defined whose expressive power coincides with first-order logic  both on the level of sentences and  open formulas. Previously it had been observed e.g.,  that the extensions of $\FO$ by constancy atoms or the Boolean negation $\sim$ are both equivalent to $\FO$ over sentences but strictly less expressive than $\FO$ for open  formulas  when the team is represented by a relation \cite{Galliani2016,Luck18}.  The logic \FOT utilizes a weaker version of disjunction and the existential quantifier in order not to go beyond the expressivity of $\FO$ (see \cite{DurandKRV15} for a systematic study of this phenomenon). We will follow the same strategy when defining our new logics in the probabilistic setting.

In this paper our focus is on  probabilistic team semantics that extends the area of team semantics from qualitative to quantitative dependencies such as probabilistic  independence.  A probabilistic team is a set of assignments with an additional function that maps each assignment to some numerical value. Usually, the function is a probability distribution, but it can also be thought of as a frequency distribution. We allow the values to be any non-negative real numbers.  The systematic study of logics in probabilistic team semantics was initiated by the works \cite{DurandHKMV18,HKMV18} and  they have already found applications, e.g.,  in the study of the implication problem of conditional independence \cite{jelia19} and the  foundations of quantum mechanics \cite{G21}.

By the results of  \cite{HKMV18,real20}  probabilistic independence logic 
 is equivalent to a sublogic of $\ESO$ interpreted over so-called $\mathbb{R}$-structures  ($\ESO_{\mathbb{R}}$). 
In this paper our goal is to initiate a study of  tractable probabilistic logics  and to find their analogues over metafinite structures. We note that the tractability frontier of  the previously defined logics in probabilistic team semantics  has been recently charted in  \cite{MJ21}.
We introduce a new logic called $\FOPTinccond$, in which the disjunction and the quantifiers are similar to the ones in $\FOT$ and the atoms compare the probabilities of events defined by quantifier-free formulas. In fact, the logic $\FOPTinccond$ can be seen as a generalization of  $\FOT$ 
for  probabilistic team semantics. In addition to the qualitative atoms expressible in  $\FOT$, certain previously studied probabilistic atoms, i.e. marginal identity and probabilistic conditional independence, are also expressible in $\FOPTinccond$. 

We also define two other team-based logics: $\FOPTinc$ which is  a fragment of $\FOPTinccond$, and $\FOPTcondineq$ in which every formula of $\FOPTinccond$ is expressible. The logic $\FOPTcondineq$ features a new type of atom, \textit{conditional probability inequality}, that can be used to compare conditional probabilities. With this atom, we can express both kinds of extended atoms from $\FOPTinccond$, i.e. the extended probabilistic inclusion and the extended probabilistic conditional independence. We also take a look at $\FOPTcondineq$ from a complexity theoretic point of view and show that its satisfiability and validity problems are \textsc{RE}-complete and co-\textsc{RE}-complete, respectively. 
 

In the second part of the article we consider logics over  two-sorted (metafinite) structures which, in addition to a finite structure, come with an infinite second sort and functions that bridge the two sorts. We define a logic, $\FOrxsum$, which is an extension of first-order logic on 
metafinite structures with a numerical second sort that has  access to multiplication and aggregate sums over non-negative reals. We show that $\FOPTinccond$ can be translated into $\FOrxsum$, and identify a fragment of $\FOrxsum$ which is equi-expressive with $\FOPTinc$. We also give a translation from $\FOrxsum$ to functional fixed point logic $\logicFont{FFP}_{\mathbb{R}}$ over metafinite structures 
and thus obtain a polynomial time upper bound for the data complexity of our new logics in the BSS-model. 

\section{Preliminaries}

First-order variables are denoted by $x,y,z$ and tuples of first-order variables by $\bar{x},\bar{y},\bar{z}$. The set of variables that appear in the tuple $\bar{x}$ is denoted by $\Var(\bar{x})$, and by $|\bar{x}|$, we denote the length of the tuple $\bar{x}$. A vocabulary $\tau$ is a finite set of relation, function, and constant symbols, denoted by $R$, $f$, and $c$, respectively. Each relation symbol $R$ and function symbol $f$ has a prescribed arity which we denote by $\ar(R)$ and $\ar(f)$. 

A vocabulary $\tau$ is called \textit{relational} if it only contains relation symbols, and \textit{functional} if it only contains function symbols. We sometimes assume that the vocabulary we are considering is relational. This assumption can be made without loss of generality since each function can be expressed by a relation that describes its graph. For some proofs, it is useful to allow the vocabulary to contain constants, and therefore we sometimes assume that the vocabulary solely consists of relation and constant symbols.

\subsection{Team semantics and the logics $\FOT$ and $\FOT^{\downarrow}$}

Let $\tau$ be a finite
vocabulary that only contains relation and constant symbols. We assume that $\{=\}\subseteq\tau$. Let $D$ be a finite set of variables and $\mA$ a finite $\tau$-structure. An \textit{assignment} of a structure $\mA$ for the set $D$ is a function $s\colon D\to A$. A \textit{team} $X$ of $\mA$ over the set $D$ is a finite set of assignments $s\colon D\to A$\footnote{Note that unlike in our version of probabilistic team semantics, here $X$ is not required to be maximal; it can be any finite set of assignments.}. The set $D$ is also called the \textit{domain} of $X$, or $\Dom(X)$ for short. For a variable $x$ and $a\in A$, we denote by $s(a/x)$, the modified assignment $s(a/x)\colon D\cup\{x\}\to A$ such that $s(a/x)(y)=a$ if $y=a$, and $s(a/x)(y)=s(y)$ otherwise. The modified team $X(a/x)$ is defined as the set $X(a/x):=\{s(a/x)\mid s\in X\}$.

We consider two team-based logics, $\FOT$ and $\FOT^{\downarrow}$, which were introduced in \cite{fot19}. The expressive power of $\FOT$ coincides with first-order logic, and $\FOT^{\downarrow}$ captures downward closed first-order team properties \cite{fot19}. The logics that we introduce in section \ref{FOPT} can be seen as generalizations of these two logics. 

First-order $\tau$-terms and atomic formulas are defined in the usual way. We let 
\begin{equation}\label{delta}
\delta\ddfn\lambda \mid \neg\delta \mid \delta\wedge\delta
\end{equation}
for any first-order atomic formula $\lambda$ of the vocabulary $\tau$. Let $x$ be a first-order variable, and let $\bar{x}$ and $\bar{y}$ be tuples of variables with $|\bar{x}|=|\bar{y}|$. The logic $\FOT$ over a vocabulary $\tau$ is then defined as follows:
\[
\phi\ddfn\lambda\mid \bar{x}\subseteq\bar{y}\mid \wcn\phi \mid \phi\wedge\phi \mid \phi\vvee\phi \mid \existso x\phi \mid \forallo x\phi,
\]
and the logic $\FOT^{\downarrow}$ as follows:
\[
\phi\ddfn\delta \mid \phi\wedge\phi \mid \phi\vvee\phi \mid \existso x\phi \mid \forallo x\phi.
\]
Note that even though $\FOT$ does not contain the negation symbol $\neg$, the formula $\neg\delta$ 
is expressible in $\FOT$ using $\subseteq$, $\wcn$, and $\vvee$, as shown in \cite{fot19}.

The semantics for the two logics is defined as follows:
\begin{itemize}
\item $\mA\models_{X}\delta$ iff $\mA\models_{s}\delta$ for all $s\in X$.
\item $\mA\models_{X}\bar{x}\subseteq\bar{y}$ iff for all $s\in X$, there exists $s'\in  X$ such that $s(\bar{x})=s'(\bar{y})$.
\item $\mA\models_{X}\wcn\phi$ iff $\mA\not\models_{X}\phi$ or $X=\varnothing$.
\item $\mA\models_{X}\phi\wedge\psi$ iff $\mA\models_{X}\phi$ and $\mA\models_{X}\psi$.
\item $\mA\models_{X}\phi\vvee\psi$ iff $\mA\models_{X}\phi$ or $\mA\models_{X}\psi$.
\item $\mA\models_{X}\existso x\phi$ iff $\mA\models_{X(a/x)}\phi$ for some $a\in A$. 
\item $\mA\models_{X}\forallo x\phi$ iff $\mA\models_{X(a/x)}\phi$ for all $a\in A$.
\end{itemize}
Note that if $X$ is empty, then $\mA\models_{X}\phi$ for any $\phi\in\FOT[\tau]$ or $\phi\in\FOT^{\downarrow}[\tau]$.

\subsection{Probabilistic team semantics}

Let $\tau$, $D$, $\mA$, and $X$ be as above, with the exception that we assume that $X$ is maximal, i.e. it contains all assignments $s\colon D\to A$. A \textit{probabilistic team} $\X$ is a function $\X\colon X\to\mathbb{R}_{\geq 0}$, where $\mathbb{R}_{\geq 0}$ is the set of non-negative real numbers. The value $\X(s)$ is also called the \textit{weight} of assignment $s$. We define the \textit{support} of $\X$ as follows:
\[
\supp(\X):=\{s\in X\mid\X(s)\neq 0\},
\]
and say that the team $\X$ is \textit{nonempty} if $\supp(\X)\neq\varnothing$. Note that even when $D=\varnothing$, the probabilistic team $\X$ may still be nonempty: if $D=\varnothing$, then $X$ is the singleton containing the empty assignment whose weight can be set as nonzero.

Functions $\X\colon X\to\mathbb{R}_{\geq 0}$ such that $\sum_{s\in X}\X(s)=1$ are called \textit{probability distributions}. They are an important special case of probabilistic teams and originally probabilistic teams were required to be probability distributions (hence the name \textit{probabilistic} team). If $\X$ is a probability distribution, we also write $\X\colon X\to [0,1]$. Note that from every nonempty probabilistic team $\X\colon X\to\mathbb{R}_{\geq 0}$ team we obtain a probability distribution $\distr(\X)\colon X\to [0,1]$ by setting 
\[
\distr(\X)(s)=\frac{1}{\sum_{t\in X}\X(t)}\cdot\X(s) 
\]
for all $s\colon D\to A$. It does not matter whether we evaluate formulas using the original team or the team that has been scaled in order to obtain a probability distribution 
(see Proposition \ref{distrprop}).

By $\X(a/x)$, we denote the probabilistic team such that 
\[
\X(a/x)(s)=\sum_{\substack{t\in X,\\ t(a/x)=s}}\X(t)
\]
for all $s\colon D\cup\{x\}\to A$. Note that if $x$ is a fresh variable (i.e. $x\notin D$), then for all $s\in X$,
\[
\X(a/x)(s(b/x))=
\begin{cases}
\X(s), &\text{ when } b=a\\
0, &\text{ when } b\neq a.
\end{cases}
\]

\section{Logics in probabilistic team semantics}\label{FOPT}


\subsection{The logics $\FOPTinccond$ and $\FOPTinc$}

First-order $\tau$-terms and atomic formulas are defined in the usual way. Let $\delta$ be as in Equation \ref{delta}. The logic $\FOPTinccond$ over a vocabulary $\tau$ is then defined as follows:
\[
\phi\ddfn\delta\mid \delta\leq\delta \mid \pci{\delta}{\delta}{\delta} \mid \wcn\phi \mid \phi\wedge\phi \mid \phi\vvee\phi \mid \existso x\phi \mid \forallo x\phi.
\]
Atoms of the form $\delta\leq\delta$ and $\pci{\delta}{\delta}{\delta}$ are called \textit{extended probabilistic inclusion} and \textit{extended probabilistic conditional independence} atoms, respectively. The fragment of $\FOPTinccond$ without extended probabilistic conditional independence atoms is denoted by $\FOPTinc$.

The semantics for $\FOPTinccond$ is defined as follows:
\begin{itemize}
\item $\mA\models_{\X}\delta$ iff $\mA\models_{s}\delta$ for all $s\in \supp(\X)$. 
\item $\mA\models_{\X}\delta_0\leq\delta_1$ iff $\sum_{s\in S_0}\X(s)\leq\sum_{s\in S_1}\X(s)$, where $S_i=\{s\in X\mid \mA\models_s\delta_i\}$ for $i=0,1$.
\item $\mA\models_{\X}\pci{\delta_0}{\delta_1}{\delta_2}$ iff 
\[
\sum_{s\in S_0\cap S_1}\X(s)\cdot\sum_{s\in S_0\cap S_2}\X(s)=\sum_{s\in S_0}\X(s)\cdot\sum_{s\in S_0\cap S_1\cap S_2}\X(s),
\]
where $S_i=\{s\in X\mid \mA\models_s\delta_i\}$ for $i=0,1,2$.
\item $\mA\models_{\X}\wcn\phi$ iff $\mA\not\models_{\X}\phi$ or $\supp(\X)=\varnothing$.
\item $\mA\models_{\X}\phi\wedge\psi$ iff $\mA\models_{\X}\phi$ and $\mA\models_{\X}\psi$.
\item $\mA\models_{\X}\phi\vvee\psi$ iff $\mA\models_{\X}\phi$ or $\mA\models_{\X}\psi$.
\item $\mA\models_{\X}\existso x\phi$ iff $\mA\models_{\X(a/x)}\phi$ for some $a\in A$. 
\item $\mA\models_{\X}\forallo x\phi$ iff $\mA\models_{\X(a/x)}\phi$ for all $a\in A$.
\end{itemize}
Note that  if $\X$ is an empty probabilistic team, then $\mA\models_{\X}\phi$ for any $\phi\in\FOPTinccond[\tau]$. The following proposition can also be verified using a simple induction:
\begin{prop}\label{distrprop}
Let $\X\colon X\to\mathbb{R}_{\geq 0}$ be a nonempty probabilistic team. Then for any formula $\phi\in\FOPTinccond[\tau]$ and any $\tau$-structure $\mA$
\[
\mA\models_{\distr(\X)}\phi\iff \mA\models_{\X}\phi.
\]
\end{prop}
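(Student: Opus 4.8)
The plan is to proceed by induction on the structure of $\phi$, the key point being that $\distr(\X)$ differs from $\X$ only by multiplication with the positive scalar $c\dfn\big(\sum_{t\in X}\X(t)\big)^{-1}$, which is well-defined precisely because $\X$ is nonempty. Throughout I would use two elementary observations. First, multiplying all weights by $c>0$ leaves the support unchanged, so $\supp(\distr(\X))=\supp(\X)$, and in particular $\distr(\X)$ is again nonempty. Second, the modification operation $\Y\mapsto\Y(a/x)$ commutes with $\distr$ and preserves nonemptiness: since every $t\in X$ contributes to exactly one $s=t(a/x)$ we have $\sum_s\Y(a/x)(s)=\sum_t\Y(t)$, so the normalising constant of $\Y(a/x)$ equals that of $\Y$, and a one-line computation then gives $\distr(\Y)(a/x)=\distr(\Y(a/x))$.

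For the base cases: the atom $\delta$ is handled by the first observation, as $\mA\models_{\X}\delta$ depends only on $\supp(\X)$. For an extended probabilistic inclusion atom $\delta_0\leq\delta_1$, passing from $\X$ to $\distr(\X)$ multiplies both sides of $\sum_{s\in S_0}\X(s)\leq\sum_{s\in S_1}\X(s)$ by the same positive $c$, so the inequality holds for one team iff it holds for the other. For an extended probabilistic conditional independence atom $\pci{\delta_0}{\delta_1}{\delta_2}$, replacing $\X$ by $c\cdot\X$ multiplies each of the two products in the defining equation by $c^2$, so again the equation is preserved in both directions.

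For the inductive steps, $\wedge$ and $\vvee$ follow directly from the induction hypothesis. For $\wcn\phi$: since $\X$, and hence $\distr(\X)$, is nonempty, the clause $\supp(\cdot)=\varnothing$ is vacuous on both sides, so $\mA\models_{\distr(\X)}\wcn\phi$ iff $\mA\not\models_{\distr(\X)}\phi$, which by the induction hypothesis is iff $\mA\not\models_{\X}\phi$, i.e.\ iff $\mA\models_{\X}\wcn\phi$. For $\existso x\phi$ (and symmetrically $\forallo x\phi$): by the second observation $\distr(\X)(a/x)=\distr(\X(a/x))$ and $\X(a/x)$ is nonempty, so the induction hypothesis applies to $\X(a/x)$ and yields $\mA\models_{\distr(\X)(a/x)}\phi$ iff $\mA\models_{\X(a/x)}\phi$; quantifying over $a\in A$ gives the claim.

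The only place needing any care — and thus the "main obstacle", although a mild one — is the quantifier case, where one must check that normalisation commutes with the team-modification operation $\Y\mapsto\Y(a/x)$ and that nonemptiness is transported along it, this being what is needed both to keep the scalar $c$ well-defined and to invoke the induction hypothesis. Every other case reduces to the single fact that the semantics of all atoms, and of $\wcn$ on nonempty teams, is invariant under positive rescaling of the weights.
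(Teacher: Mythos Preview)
Your proposal is correct and follows exactly the approach the paper indicates: the paper does not spell out a proof but simply states that the proposition ``can also be verified using a simple induction'' and points to an analogous result in \cite{jelia19}. Your induction, together with the two observations that positive rescaling preserves the support and that $\distr$ commutes with the modification $\Y\mapsto\Y(a/x)$ (because the total weight is preserved), is precisely what that simple induction amounts to.
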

Proposition \ref{distrprop} and its proof is similar to one from \cite{jelia19} which considers team-based logics with several different atoms, including marginal identity and probabilistic conditional independence (see also subsection \ref{marg&prop_cond_ind}). 

Next, we define a few notions that are needed to formulate the so-called \textit{locality} property. For a formula $\phi$, we denote by $\Var(\phi)$ the set of the free variables of $\phi$. Let $V$ be a set of variables. We write $s\restriction{V}$ for the restriction of the assignment $s$ to $V$. The \textit{restriction of a team} $X$ to $V$ is defined as $X\restriction{V}=\{s\restriction{V}\mid s\in X\}$. The \textit{restriction of a probabilistic team} $\X$ to $V$ is defined as $\X\restriction{V}\colon X\restriction{V}\to \mathbb{R}_{\geq 0}$ where
\[
(\X\restriction{V})(s)=\sum_{\substack{s'\restriction{V}=s,\\ s'\in X}}\X(s').
\]
\begin{prop}[\textbf{Locality}]\label{locality}
Let $\phi$ be any $\FOPTinccond[\tau]$-formula. Then for any set of variables $V$, any $\tau$-structure $\mA$, and any probabilistic team $\X\colon X\to\mathbb{R}_{\geq 0}$ such that $\Var(\phi)\subseteq V\subseteq D$, 
\[
\mA\models_{\X}\phi \iff \mA\models_{\X\restriction{V}}\phi.
\]
\end{prop}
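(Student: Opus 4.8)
The plan is to prove the equivalence by induction on the structure of $\phi$, following the standard template for locality proofs in (probabilistic) team semantics, but with attention to the two new extended atoms. First I would fix notation: given $\X\colon X\to\mathbb{R}_{\geq 0}$ with $\Var(\phi)\subseteq V\subseteq D$, write $\Y\dfn\X\restriction V$, so that $\Y(s)=\sum_{s'\in X,\,s'\restriction V=s}\X(s')$. The key bookkeeping fact, which I would isolate as an observation and reuse throughout, is that for any quantifier-free $\delta$ with $\Var(\delta)\subseteq V$ and any $a$, whether $\mA\models_a\delta$ depends only on $a\restriction V$; hence the set $S_\delta^{\X}\dfn\{s\in X\mid \mA\models_s\delta\}$ decomposes as the disjoint union, over $t\in S_\delta^{\Y}\dfn\{t\in X\restriction V\mid\mA\models_t\delta\}$, of the fibres $\{s'\in X\mid s'\restriction V=t\}$. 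Consequently $\sum_{s\in S_\delta^{\X}}\X(s)=\sum_{t\in S_\delta^{\Y}}\Y(t)$, and similarly for any finite Boolean combination $S_{\delta_0}\cap S_{\delta_1}\cap\cdots$ of such sets, since intersections of $V$-measurable sets are again unions of fibres.

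With this observation in hand the atomic cases are immediate. For $\phi=\delta$: $\mA\models_\X\delta$ iff $\mA\models_s\delta$ for all $s\in\supp(\X)$, and since $s\in\supp(\X)$ implies $s\restriction V\in\supp(\Y)$ and conversely every $t\in\supp(\Y)$ has a witnessing $s'\in\supp(\X)$ above it, the truth conditions coincide (using that $\delta$ is $V$-local). For $\phi=(\delta_0\leq\delta_1)$ and $\phi=\pci{\delta_0}{\delta_1}{\delta_2}$, I would simply substitute the mass identities $\sum_{s\in S_{\delta_i}^{\X}}\X(s)=\sum_{t\in S_{\delta_i}^{\Y}}\Y(t)$ (and their intersection analogues) into the defining (in)equalities; the arithmetic identities then hold on one side iff they hold on the other. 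The propositional cases $\wcn\phi$, $\phi\wedge\psi$, $\phi\vvee\psi$ are routine from the induction hypothesis, once one notes that $\supp(\X)=\varnothing$ iff $\supp(\Y)=\varnothing$ so the clause for $\wcn$ transfers.

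The quantifier cases are where the one genuine subtlety lies, so I expect the step $\phi=\existso x\psi$ (and dually $\forallo x\psi$) to be the main obstacle. The claim I need is that restriction commutes with the supplement operation, i.e. $\X(a/x)\restriction(V\cup\{x\})=(\X\restriction V)(a/x)$ for every $a\in A$; then since $\Var(\psi)\subseteq V\cup\{x\}\subseteq D\cup\{x\}$ the induction hypothesis applied to $\psi$, the team $\X(a/x)$, and the variable set $V\cup\{x\}$ gives $\mA\models_{\X(a/x)}\psi\iff\mA\models_{(\X\restriction V)(a/x)}\psi$, and existentially (resp.\ universally) quantifying over $a$ closes the case. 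Proving the commutation identity is a direct computation with the definitions of $\X(a/x)$ and of restriction, splitting on whether $x\in V$: if $x\notin V$ one uses the boxed description of $\X(a/x)$ from the preliminaries, and in general one checks that both sides assign to $s\colon V\cup\{x\}\to A$ the total $\X$-weight of $\{t\in X\mid t(a/x)\restriction(V\cup\{x\})=s\}$ — the two nested sums just reorganise the same fibration. I would also remark that the degenerate case $\X=\varnothing$ (equivalently $\X\restriction V=\varnothing$) is handled uniformly since both sides of the equivalence are then vacuously true, as already noted after the semantics. The only thing to be careful about is keeping track that the hypothesis $\Var(\phi)\subseteq V\subseteq D$ is maintained when passing to subformulas and extended variable sets, which it is because free variables only shrink under subformulas and $x$ is added to both $V$ and $D$ in the quantifier step.
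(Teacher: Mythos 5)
Your proposal is correct and follows essentially the same route as the paper's proof: the same mass-preservation identity $\sum_{s\in S}\X(s)=\sum_{t\in S\restriction V}(\X\restriction V)(t)$ for the atomic cases, and the same commutation identity $\X(a/x)\restriction(V\cup\{x\})=(\X\restriction V)(a/x)$ for the quantifier steps. Your explicit remark that intersections of $V$-determined sets are again unions of fibres (needed for the conditional independence atom) is a detail the paper leaves implicit, but it is the same argument.
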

\begin{proof}
By induction.  If $\phi=\delta$, the claim immediately holds since $\mA\models_{s}\delta\iff\mA\models_{s\restriction{V}}\delta$ for all $s\in X$. The cases $\phi=\theta_0\wedge\theta_1$ and $\phi=\theta_0\vvee\theta_1$ directly follow from the induction hypothesis.

For the cases $\phi=\delta_0\leq\delta_1$ and $\phi=\pci{\delta_0}{\delta_1}{\delta_2}$, we notice that 
\[
\sum_{s'\in S\restriction{V}}(\X\restriction{V})(s')=\sum_{s'\in S\restriction{V}}\left(\sum_{\substack{s\restriction{V}=s',\\ s\in X}}\X(s)\right)=\sum_{s\in S}\X(s),
\]
where $S=\{s\in X\mid \mA\models_s\delta\}$ and $S\restriction{V}=\{s'\in X\restriction{V}\mid \mA\models_{s'}\delta\}$ for any 
$\delta$. Then
\begin{align*}
\mA\models_{\X}\delta_0\leq\delta_1 \iff &\sum_{s\in S_0}\X(s)\leq \sum_{s\in S_1}\X(s), \text{ where }\ S_i=\{s\in X\mid \mA\models_s\delta_i\}\text{ for } i=0,1\\
\iff &\sum_{s'\in S_0\restriction{V}}(\X\restriction{V})(s')\leq \sum_{s'\in S_1\restriction{V}}(\X\restriction{V})(s'),\text{ where } S_i\restriction{V}=\{s'\in X\restriction{V}\mid \mA\models_{s'}\delta_i\} \text{ for } i=0,1\\
\iff &\mA\models_{\X\restriction{V}}\delta_0\leq\delta_1.
\end{align*}
The proof is similar for the case $\phi=\pci{\delta_0}{\delta_1}{\delta_2}$.

If $\phi=\wcn\theta_0$, then
\begin{align*}
\mA\models_{\X}\wcn\theta_0 \iff &\mA\not\models_{\X}\theta_0 \text{ or } \supp(\X)=\varnothing\\
\iff &\mA\not\models_{\X\restriction{V}}\theta_0 \text{ or } \supp(\X\restriction{V})=\varnothing\quad\text{(by the induction hypothesis)}\\
\iff &\mA\models_{\X\restriction{V}}\wcn\theta_0.
\end{align*}

If $\phi=\existso x\theta_0$, then
\begin{align*}
\mA\models_{\X}\existso x\theta_0 \iff &\mA\models_{\X(a/x)}\theta_0 \text{ for some } a\in A\\
\iff &\mA\models_{\X(a/x)\restriction{(V\cup\{x\})}}\theta_0 \text{ for some } a\in A\quad\text{(by the induction hypothesis)}\\
\iff &\mA\models_{(\X\restriction{V})(a/x)}\theta_0 \text{ for some } a\in A\quad(\text{since }\X(a/x)\restriction{(V\cup\{x\})}=(\X\restriction{V})(a/x))\\
\iff &\mA\models_{\X\restriction{V}}\existso x\theta_0.
\end{align*}
The proof is similar for the case $\phi=\forallo x\theta_0$.
\end{proof}
The next proposition shows that the quantifier-induced modifications of  probabilistic teams can also be viewed as substitution of quantified variables with suitable constants. We use this proposition in the proofs of Proposition \ref{varprop} and Theorem \ref{team2fo}. Let $\phi$ be a formula. We denote by $\phi_{(\bar{a}/\bar{x})}$ the formula obtained from $\phi$ by substituting the free occurrences of variables $\bar{x}$ with the constant symbols $\bar{a}$. When using the notation $\phi_{(\bar{a}/\bar{x})}$, we assume that the vocabulary of the model we are considering is complemented with the constant symbols $\bar{a}$.
\begin{prop}\label{lemmaprop}
Let $\phi$ be any $\FOPTinccond[\tau]$-formula. Then for any $\tau$-structure $\mA$, any probabilistic team $\X$, any tuple of variables $\bar{x}$, and any sequence $\bar{a}\in A^{|\bar{x}|}$
\[
\mA\models_{\X(\bar{a}/\bar{x})}\phi \iff \mA\models_{\X}\phi_{(\bar{a}/\bar{x})}.
\]
\end{prop}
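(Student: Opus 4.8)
The natural approach is induction on the structure of $\phi$. First I would handle the atomic cases. For $\phi = \delta$ a $\delta$-formula, the key observation is that $\mA(\bar a) \models_s \delta_{(\bar a/\bar x)}$ iff $\mA \models_{s(\bar a/\bar x)} \delta$ for a single assignment $s$ (a routine fact about first-order satisfaction under substitution of constants), together with the fact that $\supp(\X(\bar a/\bar x))$ consists exactly of the assignments $s(\bar a/\bar x)$ for $s \in \supp(\X)$ when the $\bar x$ are fresh — and more generally that the map $s \mapsto s(\bar a/\bar x)$ carries $\supp(\X)$ onto $\supp(\X(\bar a/\bar x))$. For the extended inclusion atom $\delta_0 \leq \delta_1$ and the extended conditional independence atom $\pci{\delta_0}{\delta_1}{\delta_2}$, the essential point is a bookkeeping identity: if $S_i = \{s \in X \mid \mA \models_s (\delta_i)_{(\bar a/\bar x)}\}$ computed in the modified vocabulary and $S_i' = \{t \in X(\bar a/\bar x) \mid \mA \models_t \delta_i\}$, then $\sum_{s \in S_i} \X(s) = \sum_{t \in S_i'} \X(\bar a/\bar x)(t)$, because $\X(\bar a/\bar x)(t) = \sum_{s : s(\bar a/\bar x) = t} \X(s)$ and the fibres of $s \mapsto s(\bar a/\bar x)$ partition $S_i$ over $S_i'$ (this mirrors the summation identity already used in the proof of Proposition \ref{locality}). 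Since the semantic clauses for both atoms depend on $\X$ only through these weighted sums, the equivalence follows. Intersections like $S_0 \cap S_1$ cause no trouble since intersection commutes with taking preimages.

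The Boolean and connective cases are immediate from the induction hypothesis: for $\phi = \theta_0 \wedge \theta_1$ and $\phi = \theta_0 \vvee \theta_1$ one simply notes $(\theta_0 \circ \theta_1)_{(\bar a/\bar x)} = (\theta_0)_{(\bar a/\bar x)} \circ (\theta_1)_{(\bar a/\bar x)}$ and applies the hypothesis to each conjunct/disjunct. For $\phi = \wcn \theta_0$ one uses the hypothesis for $\theta_0$ plus the already-noted fact that $\supp(\X(\bar a/\bar x)) = \varnothing$ iff $\supp(\X) = \varnothing$.

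The quantifier cases are where the real work lies, and I expect $\phi = \existso y\, \theta_0$ (and symmetrically $\phi = \forallo y\, \theta_0$) to be the main obstacle. Here one must be careful about the interaction of the two team modifications $\X \mapsto \X(\bar a/\bar x)$ (substitution) and $\X \mapsto \X(b/y)$ (quantification), and about variable capture. Assuming, as one may by renaming, that $y \notin \Var(\bar x)$ and $y$ does not occur in $\bar a$, the plan is to show $(\X(\bar a/\bar x))(b/y) = (\X(b/y))(\bar a/\bar x)$ as probabilistic teams — i.e. the two modifications commute — which reduces to unwinding the defining sums $(\X(c/z))(s) = \sum_{t : t(c/z) = s} \X(t)$ and checking that the composite summations agree; this is the analogue of the step "$\X(a/x)\restriction(V\cup\{x\}) = (\X\restriction V)(a/x)$" in the proof of Proposition \ref{locality}. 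One then computes
\begin{align*}
\mA \models_{\X(\bar a/\bar x)} \existso y\, \theta_0
&\iff \mA \models_{(\X(\bar a/\bar x))(b/y)} \theta_0 \text{ for some } b \in A \\
&\iff \mA \models_{(\X(b/y))(\bar a/\bar x)} \theta_0 \text{ for some } b \in A \\
&\iff \mA \models_{\X(b/y)} (\theta_0)_{(\bar a/\bar x)} \text{ for some } b \in A \quad (\text{induction hypothesis})\\
&\iff \mA \models_{\X} \existso y\, (\theta_0)_{(\bar a/\bar x)} \\
&\iff \mA \models_{\X} (\existso y\, \theta_0)_{(\bar a/\bar x)},
\end{align*}
the last step using that $y \notin \Var(\bar x)$ so substitution of $\bar a$ for $\bar x$ commutes with binding $y$. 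The $\forallo$ case is identical with "for some" replaced by "for all". The only genuine subtlety throughout is ensuring the substitution $\phi_{(\bar a/\bar x)}$ and the team modifications are compatible, which is handled by the fresh-variable convention together with the commutation identity; the rest is routine.
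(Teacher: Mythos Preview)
Your proposal is correct and follows the same inductive structure as the paper's proof; the treatment of the atomic and Boolean cases is essentially identical (indeed your summation identity for the extended atoms is arguably cleaner than the paper's two-step argument via locality).

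The one noteworthy difference is the quantifier case. You prove the commutation identity $(\X(\bar a/\bar x))(b/y) = (\X(b/y))(\bar a/\bar x)$ for probabilistic teams and then apply the induction hypothesis once, to the team $\X(b/y)$. The paper instead avoids proving team-commutation by invoking the induction hypothesis \emph{twice}: writing $(\X(\bar a/\bar x))(b/y)$ as $\X(\bar a b/\bar x y)$, it first applies the hypothesis with the full tuple $(\bar a b, \bar x y)$ to pass to $\mA \models_{\X} (\theta_0)_{(\bar a b/\bar x y)}$, then observes $(\theta_0)_{(\bar a b/\bar x y)} = ((\theta_0)_{(\bar a/\bar x)})_{(b/y)}$ and applies the hypothesis backwards with $(b,y)$ to recover $\mA \models_{\X(b/y)} (\theta_0)_{(\bar a/\bar x)}$. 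The paper's trick trades a team-level calculation for a purely syntactic one about iterated substitution, which is slightly slicker; your route is more direct but requires unwinding the double sums. Both are valid.
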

\begin{proof}
If $\phi=\delta$, then
\begin{align*}
\mA\models_{\X(\bar{a}/\bar{x})}\delta \iff &\text{for all } s\colon D\cup\Var(\bar{x})\to A, \text{ if } s\in\supp(\X(\bar{a}/\bar{x})), \text{ then }\mA\models_{s}\delta\\
\iff &\text{for all } s\colon D\cup\Var(\bar{x})\to A, \text{ if } s\in\supp(\X(\bar{a}/\bar{x})), \text{ then }\mA\models_{s}\delta_{(\bar{a}/\bar{x})} \\ &(\text{if } s\in\supp(\X(\bar{a}/\bar{x})), \text{ then } s(\bar{x})=\bar{a})\\
\iff &\mA\models_{\X(\bar{a}/\bar{x})}\delta_{(\bar{a}/\bar{x})}\\
\iff &\mA\models_{\X}\delta_{(\bar{a}/\bar{x})}\quad (\text{by locality since }\X(\bar{a}/\bar{x})\restriction{(D\backslash\Var(\bar{x}))}=\X\restriction{(D\backslash\Var(\bar{x}))}).
\end{align*}
For the cases $\phi=\delta_0\leq\delta_1$ and $\phi=\pci{\delta_0}{\delta_1}{\delta_2}$, we notice that 
\[
\sum_{s\in S}\X(\bar{a}/\bar{x})(s)=\sum_{s'\in S'}\X(\bar{a}/\bar{x})(s'),
\]
where $S=\{s\colon D\cup\Var(\bar{x})\to A\mid\mA\models_{s}\delta\}$ and $S'=\{s'\colon D\cup\Var(\bar{x})\to A\mid\mA\models_{s'}\delta_{(\bar{a}/\bar{x})}\}$ for any $\delta$. For this, first note that if $s(\bar{x})\neq \bar{a}$, then $\X(\bar{a}/\bar{x})(s)=0$. Therefore, only those assignments $s$ for which $s(\bar{x})=\bar{a}$ may contribute to the sums. For those assignments $s$, clearly $\mA\models_{s}\delta \iff \mA\models_{s}\delta_{(\bar{a}/\bar{x})}$, and therefore $\sum_{s\in S}\X(\bar{a}/\bar{x})(s)=\sum_{s'\in S'}\X(\bar{a}/\bar{x})(s')$. With this, it is straightforward to check that the claim holds for the cases $\phi=\delta_0\leq\delta_1$ and $\phi=\pci{\delta_0}{\delta_1}{\delta_2}$.

If $\phi=\wcn\theta_0$, then 
\begin{align*}
\mA\models_{\X(\bar{a}/\bar{x})}\wcn\theta_0 \iff &\mA\not\models_{\X(\bar{a}/\bar{x})}\theta_0 \text{ or } \supp(\X(\bar{a}/\bar{x}))=\varnothing\\
\iff &\mA\not\models_{\X}{\theta_0}_{(\bar{a}/\bar{x})} \text{ or } \supp(\X)=\varnothing\quad\text{(by the induction hypothesis)}\\
\iff &\mA\models_{\X}\wcn{\theta_0}_{(\bar{a}/\bar{x})}.
\end{align*}

The proofs for the cases $\phi=\theta_0\wedge\theta_1$ and $\phi=\theta_0\vvee\theta_1$ directly follow from the induction hypothesis.

If $\phi=\existso y\theta_0$, then
\begin{align*}
\mA\models_{\X(\bar{a}/\bar{x})}\existso y\theta_0\iff &\mA\models_{\X(\bar{a}b/\bar{x}y)}\theta_0 \text{ for some }b\in A\\
\iff &\mA\models_{\X}{\theta_0}_{(\bar{a}b/\bar{x}y)} \text{ for some }b\in A\quad(\text{by the induction hypothesis})\\
\iff &\mA\models_{\X(b/y)}{\theta_0}_{(\bar{a}/\bar{x})} \text{ for some }b\in A\quad(\text{by the induction hypothesis})\\
\iff &\mA\models_{\X}\existso y{\theta_0}_{(\bar{a}/\bar{x})}.
\end{align*}
The proof is similar for the case $\phi=\forallo y\theta_0$. 
\end{proof}

The next proposition shows that we can rename quantified variables in the formulas. This is used in the proofs of Theorems \ref{constr_psi} and \ref{team2fo}, where we assume that certain variables have no bounded occurrences in the formulas. We introduce a notation that is analogous to $\phi_{(\bar{a}/\bar{x})}$: we write $\phi_{(\bar{y}/\bar{x})}$ for the formula where, instead of the constant symbols $\bar{a}$, we substitute $\bar{x}$ with the variables $\bar{y}$. 


\begin{prop}\label{varprop}
Let $\theta$ be any $\FOPTinccond[\tau]$-formula with free variables from $\{v_1,\dots ,v_k\}$. Suppose that $x$ does not appear in $\theta$. Then for any $\tau$-structure $\mA$, any probabilistic team $\X$ over $\{v_1,\dots ,v_k\}$, any $Q\in\{\existso,\forallo\}$, and any $w\in\{v_1,\dots ,v_k\}$
\[
\mA\models_{\X}Qw\theta\iff \mA\models_{\X}Qx\theta_{(x/w)}.
\]
\end{prop}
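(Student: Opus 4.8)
The plan is to reduce the statement to Proposition~\ref{lemmaprop}, so that no fresh induction on the structure of $\theta$ is needed. Fix $\mA$, $\X$, $Q$, and $w$, and consider first $Q=\existso$. Unfolding the semantics of the quantifier, $\mA\models_{\X}\existso w\theta$ holds iff $\mA\models_{\X(a/w)}\theta$ for some $a\in A$, and $\mA\models_{\X}\existso x\theta_{(x/w)}$ holds iff $\mA\models_{\X(a/x)}\theta_{(x/w)}$ for some $a\in A$. The case $Q=\forallo$ is identical, with the existential over $a$ replaced by a universal. Hence it suffices to show, for each fixed $a\in A$ (working over the vocabulary $\tau$ complemented by a fresh constant symbol $a$, as in the convention for $\phi_{(\bar a/\bar x)}$), that $\mA\models_{\X(a/w)}\theta$ iff $\mA\models_{\X(a/x)}\theta_{(x/w)}$.

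By Proposition~\ref{lemmaprop} applied to $\theta$ with the substitution $a/w$, and applied to $\theta_{(x/w)}$ with the substitution $a/x$, this is in turn equivalent to $\mA\models_{\X}\theta_{(a/w)}$ iff $\mA\models_{\X}(\theta_{(x/w)})_{(a/x)}$. So the whole claim follows once we establish the purely syntactic identity $(\theta_{(x/w)})_{(a/x)}=\theta_{(a/w)}$, and then chain the equivalences: $\mA\models_{\X}\existso w\theta$ iff $\mA\models_{\X}\theta_{(a/w)}$ for some $a$ iff $\mA\models_{\X}\existso x\theta_{(x/w)}$, and likewise with ``for all'' in place of ``for some'' for $Q=\forallo$.

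The syntactic identity is the only genuinely delicate point, and it is exactly where the hypothesis that $x$ does not appear in $\theta$ is used. Since $x$ occurs nowhere in $\theta$, no quantifier of $\theta$ binds $x$; thus replacing the free occurrences of $w$ in $\theta$ by $x$ causes no variable capture, and the occurrences of $x$ in $\theta_{(x/w)}$ are precisely the images of those free occurrences of $w$, and they remain free. Substituting $a$ for $x$ in $\theta_{(x/w)}$ therefore turns exactly the free occurrences of $w$ in $\theta$ into $a$, which is the definition of $\theta_{(a/w)}$; a formal verification is a routine induction on the structure of $\theta$. (Alternatively, one could prove the proposition by a direct induction on $\theta$ mirroring the proof of Proposition~\ref{lemmaprop}, but the route through Proposition~\ref{lemmaprop} is considerably shorter.)
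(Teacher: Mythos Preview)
Your proof is correct and, in fact, a bit more streamlined than the paper's. Both arguments reduce the claim to Proposition~\ref{lemmaprop}, but they bridge the $w$-side and the $x$-side differently. The paper introduces an auxiliary renamed team $\X_{x/w}$ (the team $\X$ with the variable $w$ relabelled as $x$), then argues the chain
\[
\mA\models_{\X(a/w)}\theta \;\Longleftrightarrow\; \mA\models_{\X(a/w)_{x/w}}\theta_{(x/w)} \;\Longleftrightarrow\; \mA\models_{\X_{x/w}(a/x)}\theta_{(x/w)} \;\Longleftrightarrow\; \mA\models_{\X_{x/w}}\theta_{(x/w)(a/x)} \;\Longleftrightarrow\; \mA\models_{\X}\theta_{(x/w)(a/x)} \;\Longleftrightarrow\; \mA\models_{\X(a/x)}\theta_{(x/w)},
\]
using a team-renaming step, Proposition~\ref{lemmaprop} (twice, both for the variable $x$), and locality. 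Your route instead applies Proposition~\ref{lemmaprop} once for $w$ and once for $x$ and collapses everything to the syntactic identity $(\theta_{(x/w)})_{(a/x)}=\theta_{(a/w)}$. This avoids defining the auxiliary team $\X_{x/w}$ and invoking locality, at the cost of spelling out the capture-freeness argument for the substitution identity (which the paper leaves implicit in its first renaming step). Both are short; yours is the more self-contained of the two.
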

\begin{proof}
Define $\X_{x/w}\colon X_{x/w}\to A$ as the probabilistic team such that $X_{x/w}=\{s'\mid s\in X\}$ is the team over $(\{v_1,\dots ,v_k\}\backslash\{w\})\cup\{x\}$ where $s'(v_i)=s(v_i)$ when $v_i\neq w$, $s'(x)=s(w)$, and $\X_{x/w}(s')=\X(s)$. Thus the probabilistic team $\X_{x/w}$ is otherwise the same as the team $\X$ but the variable $w$ is replaced with $x$. Now we have
\begin{align*}
\mA\models_{\X}Q\bar{w}\theta\iff &\mA\models_{\X(a/w)}\theta \quad \text{for some/all } a\in A\\
\iff &\mA\models_{{\X(a/w)}_{x/w}}\theta_{(x/w)}\quad \text{for some/all } a\in A\\
\iff &\mA\models_{{\X}_{x/w}(a/x)}\theta_{(x/w)}\quad \text{for some/all } a\in A\\
\iff &\mA\models_{{\X}_{x/w}}\theta_{(x/w)(a/x)} \quad \text{for some/all } a\in A \quad(\text{by Prop. \ref{lemmaprop}})\\
\iff &\mA\models_{\X}\theta_{(x/w)(a/x)}\quad \text{for some/all } a\in A \quad\text{(by locality since } \\
&\X_{x/w}{\restriction{(\Var(\bar{v})\backslash\{w\})}}=\X{\restriction{(\Var(\bar{v})\backslash\{w\})}}\text{ )}\\
\iff &\mA\models_{\X(a/x)}\theta_{(x/w)}\quad  \text{for some/all } a\in A \quad(\text{by Prop. \ref{lemmaprop}})\\
\iff &\mA\models_{\X}Qx\theta_{(x/w)}.
\end{align*}
\end{proof}

\subsection{The logic \FOPTcondineq}

Next, we define a logic similar to $\FOPTinccond$. The difference is that, instead of the extended probabilistic inclusion and extended probabilistic conditional independence atoms, we have atoms of the form $(\delta_0|\delta_1)\leq(\delta_2|\delta_3)$, where $\delta_i$ is defined as in Equation \ref{delta}. We call these \textit{conditional probability inequality} atoms. The logic $\FOPTcondineq$ over a vocabulary $\tau$ is defined as follows:
\[
\phi\ddfn\delta\mid (\delta|\delta)\leq(\delta|\delta) \mid\wcn\phi \mid \phi\wedge\phi \mid \phi\vvee\phi \mid \existso x\phi \mid \forallo x\phi.
\]

The semantics for the atom $(\delta_0|\delta_1)\leq(\delta_2|\delta_3)$ is defined as follows:
\begin{align*}
\mA\models_{\X}(\delta_0|\delta_1)\leq(\delta_2|\delta_3)\iff \sum_{s\in S_0\cap S_1}\X(s)\cdot\sum_{s\in S_3}\X(s)\leq\sum_{s\in S_2\cap S_3}\X(s)\cdot\sum_{s\in S_1}\X(s)
\end{align*}
where $S_i=\{s\in X\mid \mA\models_s\delta_i\}$ for $i=0,1,2,3$.
Extended probabilistic inclusion and extended probabilistic conditional independence can be expressed in $\FOPTcondineq$. Suppose that $\delta_0,\delta_1,\delta_2$ are 
 formulas with free variables from $\bar{x}=(x_1,\dots,x_n)$. It is easy to check that
\[
\delta_0\leq\delta_1\equiv(\delta_0|x_1=x_1)\leq(\delta_1|x_1=x_1)
\]
and
\[
\pci{\delta_0}{\delta_1}{\delta_2}\equiv(\delta_1|\delta_0)\approx(\delta_1|\delta_0\wedge\delta_2),
\]
where $(\delta_1|\delta_0)\approx(\delta_1|\delta_0\wedge\delta_2)$ is an abbreviation for the formula $(\delta_1|\delta_0)\leq(\delta_1|\delta_0\wedge\delta_2)\wedge(\delta_1|\delta_0\wedge\delta_2)\leq(\delta_1|\delta_0)$.

Note that $\FOPTcondineq$ is local since the proof of Proposition \ref{locality} can easily be extended to cover atoms of the form $(\delta_0|\delta_1)\leq(\delta_2|\delta_3)$. Moreover, proofs for Propositions \ref{distrprop}, \ref{lemmaprop}, and \ref{varprop} can also be extended for $\FOPTcondineq$. 

\section{Comparison of logics in team semantics}

\subsection{$\FOPTinc$ as a generalization of $\FOT$ and $\FOT^{\downarrow}$}

The logic $\FOPTinc$ can be seen as a  generalization of $\FOT$ and $\FOT^{\downarrow}$ in the following sense:
\begin{prop} Let $\phi$ be any $\FOT[\tau]$-formula or $\FOT^{\downarrow}[\tau]$-formula. Then there exists an $\FOPTinc[\tau]$-formula  $\psi_{\phi}$ such that for any $\tau$-structure $\mA$, and any probabilistic team $\X$
\[
\mA\models_{\supp(\X)}\phi \iff \mA\models_{\X}\psi_{\phi}.
\]
\end{prop}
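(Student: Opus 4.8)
The plan is to define $\psi_\phi$ by induction on the structure of $\phi$, handling $\FOT$ and $\FOT^{\downarrow}$ at once since the latter's connectives form a subset of the former's. The guiding observation is that every syntactic construct of these logics commutes with passing to the support, so the translation should be essentially the identity. Concretely, I would set $\psi_\delta\dfn\delta$ for the base case ($\lambda$ for $\FOT$, $\delta$ for $\FOT^{\downarrow}$), which is correct because $\mA\models_\X\delta$ is by definition $\mA\models_s\delta$ for all $s\in\supp(\X)$; and for the compound cases $\psi_{\theta_0\wedge\theta_1}\dfn\psi_{\theta_0}\wedge\psi_{\theta_1}$, $\psi_{\theta_0\vvee\theta_1}\dfn\psi_{\theta_0}\vvee\psi_{\theta_1}$, $\psi_{\wcn\theta_0}\dfn\wcn\psi_{\theta_0}$, $\psi_{\existso x\theta_0}\dfn\existso x\psi_{\theta_0}$, $\psi_{\forallo x\theta_0}\dfn\forallo x\psi_{\theta_0}$. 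The conjunction and disjunction cases are immediate from the induction hypothesis; the quantifier cases reduce to the identity $\supp(\X(a/x))=\supp(\X)(a/x)$, valid because a sum of non-negative reals is positive iff some summand is; and the $\wcn$ case works because the side condition $\supp(\X)=\varnothing$ in probabilistic semantics matches $X=\varnothing$ in team semantics under $X=\supp(\X)$. For $\FOT^{\downarrow}$ these are all the cases, so nothing more is needed there.

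The one case that calls for an idea is the inclusion atom $\bar{x}\subseteq\bar{y}$ of $\FOT$, which has no primitive analogue in $\FOPTinc$; this is where the extended probabilistic inclusion atoms come in. Let $z_1,\dots,z_n$ be fresh variables with $n=|\bar{x}|=|\bar{y}|$, abbreviate by $\bar{x}=\bar{z}$ (resp.\ $\bar{y}=\bar{z}$) the conjunction $\bigwedge_{i\le n}(x_i=z_i)$ (resp.\ $\bigwedge_{i\le n}(y_i=z_i)$), and let $\bot\dfn\neg(z_1=z_1)$ be a fixed unsatisfiable quantifier-free formula. I would put
\[
\psi_{\bar{x}\subseteq\bar{y}}\dfn\forallo z_1\cdots\forallo z_n\,\big(\,(\bar{x}=\bar{z})\leq\bot\;\vvee\;\wcn\big((\bar{y}=\bar{z})\leq\bot\big)\,\big).
\]
To verify this I would observe that, for each $\bar{a}\in A^n$, since $\bar{z}$ is fresh one has $\supp(\X(\bar{a}/\bar{z}))=\{s(\bar{a}/\bar{z})\mid s\in\supp(\X)\}$; that, since $\bot$ is never satisfied, $(\bar{x}=\bar{z})\leq\bot$ holds in $\X(\bar{a}/\bar{z})$ iff $\sum\{\X(s)\mid s\in\supp(\X),\;s(\bar{x})=\bar{a}\}=0$, i.e.\ iff no $s\in\supp(\X)$ satisfies $s(\bar{x})=\bar{a}$; and that, dually, $\wcn((\bar{y}=\bar{z})\leq\bot)$ holds in $\X(\bar{a}/\bar{z})$ iff $\supp(\X)=\varnothing$ or some $s\in\supp(\X)$ satisfies $s(\bar{y})=\bar{a}$. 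Combining these, $\mA\models_\X\psi_{\bar{x}\subseteq\bar{y}}$ holds iff for every $\bar{a}$ the existence of some $s\in\supp(\X)$ with $s(\bar{x})=\bar{a}$ forces the existence of some $s'\in\supp(\X)$ with $s'(\bar{y})=\bar{a}$, which is precisely $\mA\models_{\supp(\X)}\bar{x}\subseteq\bar{y}$ (the empty-support case being vacuously true on both sides). Since the construction introduces only $\leq$-atoms, each $\psi_\phi$ indeed lies in $\FOPTinc[\tau]$.

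I expect this inclusion clause to be the only real obstacle, and the difficulty is conceptual: $\bar{x}\subseteq\bar{y}$ is a purely qualitative property of the support, indifferent to the actual weights, whereas the sole new resource of $\FOPTinc$ beyond $\FO$ is an atom comparing \emph{weighted} sums. The trick that resolves it is that comparing a weighted sum against the zero sum $\bot$ collapses the quantitative $\leq$-atom into a qualitative emptiness test on the support, and sweeping such tests over all possible values $\bar{a}$ with $\forallo z_1\cdots\forallo z_n$ together with one $\wcn$ recaptures inclusion. The rest is routine bookkeeping: the support-commutation identities, ensuring the auxiliary variables $z_i$ are globally fresh (or renaming bound variables via Proposition \ref{varprop} where an inner inclusion atom would clash with an outer quantifier), and the empty-support corner cases, all in the style of the computations already carried out in the proof of Proposition \ref{locality}.
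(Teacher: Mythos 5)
Your proof is correct and follows essentially the same route as the paper's: the translation is the identity except on inclusion atoms, which are handled by a formula of the shape $\forallo \bar{z}(\cdots\vvee\wcn\cdots)$ sweeping over all candidate values. The only difference is the inner test for ``no assignment in the support maps $\bar{v}_0$ to $\bar{a}$'': the paper uses the plain quantifier-free formula $\neg\,\bar{v}_0=\bar{x}$, whose probabilistic semantics already quantifies universally over the support, whereas you use the extended probabilistic inclusion atom $(\bar{x}=\bar{z})\leq\bot$ against an unsatisfiable right-hand side; both collapse to the same emptiness test, so the translations are equivalent.
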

\begin{proof}
Notice that only inclusion atoms, i.e. atoms of the form $\bar{v}_0\subseteq\bar{v}_1$ need to be translated. For each formula $\phi$, we let $\psi_{\phi}$ be the same as $\phi$, except that each inclusion atom $\theta$ appearing in $\phi$ is substituted with the formula $\psi_{\theta}$ as described below. Provided that we can successfully translate each $\theta$, it is easy to check that the claim holds. If $\theta=\bar{v}_0\subseteq\bar{v}_1$, then we define $\psi_{\theta}:=\forallo \bar{x}(\neg\bar{v}_0=\bar{x}\vvee\wcn\neg\bar{v}_1=\bar{x})$. We show that the claim holds for $\theta$ and $\psi_{\theta}$.

If $\supp(\X)=\varnothing$, then both $\X$ and $\supp(\X)$ satisfy every formula. Thus, without loss of generality, we may assume that $\supp(\X)\neq\varnothing$. Now
\begin{align*}
\mA\models_{\supp(\X)}\bar{v}_0\subseteq\bar{v}_1 \iff &\text{ for all } s\in\supp(\X), \text{ there exists } s'\in \supp(\X) \text{ such that } s(\bar{v}_0)=s'(\bar{v}_1)\\
\iff &\text{ for all } \bar{a}\in A^{|\bar{v}_0|}, \text{ if there is }s\in \supp(\X) \text{ such that } s(\bar{v}_0)=\bar{a},\\
&\text{ then there exists } s'\in \supp(\X)\text{ such that }s'(\bar{v}_1)=\bar{a}\\
\iff &\text{ for all } \bar{a}\in A^{|\bar{v}_0|}, \text{ }  s(\bar{v}_0)\neq\bar{a} \text{ for all } s\in \supp(\X),\\
& \text{ or there exists } s'\in \supp(\X) \text{ such that } s'(\bar{v}_1)=\bar{a}\\
\iff &\text{ for all } \bar{a}\in A^{|\bar{v}_0|},\text{ } \mA\models_{\X}\neg\bar{v}_0=\bar{a} \text{ or } \mA\not\models_{\X}\neg\bar{v}_1=\bar{a}\\
\iff &\text{ for all } \bar{a}\in A^{|\bar{v}_0|},\text{ } \mA\models_{\X}\neg\bar{v}_0=\bar{a}\vvee\wcn\neg\bar{v}_1=\bar{a}\quad(\text{since }\supp(\X)\neq\varnothing)\\
\iff &\text{ for all } \bar{a}\in A^{|\bar{v}_0|},\text{ } \mA\models_{\X(\bar{a}/\bar{x})}\neg\bar{v}_0=\bar{x}\vvee\wcn\neg\bar{v}_1=\bar{x}\quad( \text{by Prop. \ref{lemmaprop}})\\
\iff & \mA\models_{\X}\forallo \bar{x}(\neg\bar{v}_0=\bar{x}\vvee\wcn\neg\bar{v}_1=\bar{x}).
\end{align*}
\end{proof}

\subsection{Expressivity of marginal identity and probabilistic conditional independence atoms in $\FOPTinccond$}\label{marg&prop_cond_ind}

The logics in probabilistic team semantics often include the \textit{marginal identity atom} $\bar{v}_0\approx\bar{v}_1$ and the \textit{probabilistic conditional independence atom} $\pci{\bar{v}_0}{\bar{v}_1}{\bar{v}_2}$ where $\bar{v}_0,\bar{v}_1$ and $\bar{v}_2$ are tuples of variables, instead of formulas. (See e.g. \cite{jelia19}.) In the case of the marginal identity atom, we additionally require that $|\bar{v}_0|=|\bar{v}_1|$. We first give semantics for these atoms, and then show that the atoms of the form $\delta_0\leq\delta_1$ and $\pci{\delta_0}{\delta_1}{\delta_2}$ extend these in the sense that when the weak universal quantifier $\forallo$ is available, $\bar{v}_0\approx\bar{v}_1$ and $\pci{\bar{v}_0}{\bar{v}_1}{\bar{v}_2}$ are also expressible. 

Let $\bar{x}$ be a tuple of variables and $\bar{a}\in A^{|\bar{x}|}$, and define 
\[
|\X_{\bar{x}=\bar{a}}|:=\sum_{\substack{s\in X,\\s(\bar{x})=\bar{a}}} \X(s).
\]
The semantics for the marginal identity atom and the probabilistic conditional independence atom is defined as follows:
\begin{itemize}
\item $\mA\models_{\X}\bar{v}_0\approx\bar{v}_1$ iff $|\X_{\bar{v}_0=\bar{a}}|=|\X_{\bar{v}_1=\bar{a}}|$ for all $\bar{a}\in A^{|\bar{v}_0|}$.
\item $\mA\models_{\X}\pci{\bar{v}_0}{\bar{v}_1}{\bar{v}_2}$ iff \[
|\X_{\bar{v}_0\bar{v}_1=s(\bar{v}_0\bar{v}_1)}|\cdot|\X_{\bar{v}_0\bar{v}_2=s(\bar{v}_0\bar{v}_2)}=|\X_{\bar{v}_0=s(\bar{v}_0)}|\cdot|\X_{\bar{v}_0\bar{v}_1\bar{v}_2=s(\bar{v}_0\bar{v}_1\bar{v}_2)}|
\] for all $s\colon\Var(\bar{v}_0\bar{v}_1\bar{v}_2)\to A$.
\end{itemize}
For probabilistic conditional independence, the equivalent formula of $\FOPTinccond$ is straightforward to obtain:
 \[
\pci{\bar{v}_0}{\bar{v}_1}{\bar{v}_2}\equiv\forallo \bar{x}\bar{y}\bar{z}(\pci{\bar{v}_0=\bar{x}}{\bar{v}_1=\bar{y}}{\bar{v}_2=\bar{z}}).
\]
For the marginal identity atom, it feels natural to first define a new kind of formula $\delta_0\approx\delta_1:=\delta_0\leq\delta_1\wedge\delta_1\leq\delta_0$,
and use that to obtain that
\[
\bar{v}_0\approx\bar{v}_1\equiv\forallo\bar{x}(\bar{v}_0=\bar{x}\approx\bar{v}_1=\bar{x}).
\]
However, there is also a shorter formula for the marginal identity atom:
\[
\bar{v}_0\approx\bar{v}_1\equiv\forallo\bar{x}(\bar{v}_0=\bar{x}\leq\bar{v}_1=\bar{x}).
\]
To see that this formula suffices, note that since $A_0$ is finite,
\[
|\X_{\bar{v}_0=\bar{a}}|\leq|\X_{\bar{v}_1=\bar{a}}| \text{ for all } \bar{a}\in A^{|\bar{v}_0|}
\]
implies that 
\[
|\X_{\bar{v}_0=\bar{a}}|=|\X_{\bar{v}_1=\bar{a}}| \text{ for all } \bar{a}\in A^{|\bar{v}_0|}.
\]
Because of this, marginal identity atoms were originally (in \cite{DurandHKMV18}) called \textit{probabilistic inclusion atoms} and denoted by $\bar{v}_0\leq\bar{v}_1$. Instead of defining the formula $\delta_0\approx\delta_1$ as we have done above, we could also treat it as a new kind of atomic formula. Then the atoms of the form $\delta_0\leq\delta_1$ and $\delta_0\approx\delta_1$ can be seen as extended probabilistic inclusion and extended marginal identity atoms, respectively. However, even though the truth definitions for $\bar{v}_0\leq\bar{v}_1$ and $\bar{v}_0\approx\bar{v}_1$ are equivalent, this is not the case for $\delta_0\leq\delta_1$ and $\delta_0\approx\delta_1$.

\section{Translation from $\FOPTcondineq$ to real arithmetic}\label{team2real}

In this section, we show that the satisfiability and validity problems for $\FOPTcondineq$ are \textsc{RE}-complete and co-\textsc{RE}-complete, respectively. The main ingredient of the proof is constructing a translation from $\FOPTcondineq$ to real arithmetic.

We say that a $\tau$-formula $\phi\in\FOPTcondineq$ is \textit{satisfiable in a} $\tau$-\textit{structure} $\mA$ if there exists a nonempty probabilistic team $\X$ of $\mA$ such that $\mA\models_{\X}\phi$. Analogously, $\phi$ is \textit{valid in} $\mA$ if $\mA\models_{\X}\phi$ for all probabilistic teams $\X$ of $\mA$ over $\Var(\phi)$. A $\tau$-formula $\phi\in\FOPTcondineq$ is \textit{satisfiable} if there exists a $\tau$-structure $\mA$ such that $\phi$ is satisfiable in $\mA$. A $\tau$-formula $\phi\in\FOPTcondineq$ is \textit{valid} if $\phi$ is valid in $\mA$ for all a $\tau$-structures $\mA$.

\begin{theorem}\label{constr_psi}
Let $\tau$ be a finite relational vocabulary, and $\mA$ a finite $\tau$-structure. 
\begin{itemize}
\item[(i)] For each $\tau$-formula $\phi\in\FOPTinc$ there exists a first-order sentence $\psi$ over vocabulary $\{+,\leq,0\}$ such that $\phi$ is satisfiable in $\mA$ iff $(\mathbb{R},+,\leq,0)\models \psi$.
\item[(ii)] For each $\tau$-formula $\phi\in\FOPTinccond$ there exists a first-order sentence $\psi$ over vocabulary $\{+,\times,\leq,0,1\}$ such that $\phi$ is satisfiable in $\mA$ iff $(\mathbb{R},+,\times,\leq,0,1)\models \psi$.
\item[(iii)] For each $\tau$-formula $\phi\in\FOPTcondineq$ there exists a first-order sentence $\psi$ over vocabulary $\{+,\times,\leq,0,1\}$ such that $\phi$ is satisfiable in $\mA$ iff $(\mathbb{R},+,\times,\leq,0,1)\models \psi$.
\end{itemize} 
\end{theorem}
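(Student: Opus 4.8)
The plan is to fix the finite structure $\mA$, with domain $A=\{a_1,\dots,a_n\}$, and to encode an arbitrary probabilistic team over $\Var(\phi)$ by its finitely many weights. Write $\Var(\phi)=\{v_1,\dots,v_k\}$; by Proposition~\ref{locality} it suffices to look for witnessing teams with domain exactly $\Var(\phi)$, and such a team is determined by the $N:=n^{k}$ values it assigns to the assignments $s\colon\{v_1,\dots,v_k\}\to A$. I introduce one real variable $p_s$ for each such $s$ and build, by recursion on formulas, a first-order translation $\theta\mapsto\theta^{\flat}$ over the appropriate real vocabulary such that, for every probabilistic team $\X$ over $\{v_1,\dots,v_k\}$, interpreting each $p_s$ as $\X(s)$ makes $\theta^{\flat}$ true in the reals exactly when $\mA\models_{\X}\theta$. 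The witnessing sentence is then
\[
\psi\;:=\;\exists\bar p\,\Big(\bigwedge_{s}0\leq p_s\;\wedge\;\bigvee_{s}\neg(p_s\leq 0)\;\wedge\;\phi^{\flat}\Big),
\]
whose first conjunct enforces non-negativity and whose second enforces non-emptiness; all three parts come from this one construction, differing only in which atoms, hence which arithmetic symbols, occur.

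Because $\mA$ is finite and explicitly given, for every quantifier-free $\delta$ (over $\tau$ possibly augmented with constants naming elements of $A$) and every $s$ the truth value of $\mA\models_{s}\delta$ is determined, so the sets $S_i=\{s\mid\mA\models_{s}\delta_i\}$ from the semantics of the probabilistic atoms are concrete finite index sets and every atom translates directly: $\delta^{\flat}:=\bigwedge_{s\,:\,\mA\not\models_{s}\delta}p_s\leq 0$ (which, given the standing non-negativity, says $\X(s)=0$ whenever $s$ falsifies $\delta$); $(\delta_0\leq\delta_1)^{\flat}:=\sum_{s\in S_0}p_s\leq\sum_{s\in S_1}p_s$; the extended conditional independence atom of $\FOPTinccond$ and the conditional probability inequality atom of $\FOPTcondineq$ translate, via their defining (in)equalities, into polynomial relations among the $p_s$ built with $+$ and $\times$ (with $=$ written as $\leq\wedge\geq$); $(\wcn\theta)^{\flat}:=\neg(\theta^{\flat})\vee\bigwedge_{s}p_s\leq 0$; and $\wedge,\vvee$ go to $\wedge,\vee$. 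Hence $\FOPTinc$ is translated using only $\{+,\leq,0\}$, which gives (i), while the multiplicative atoms of $\FOPTinccond$ and $\FOPTcondineq$ additionally call for $\times$ and the result may be read as a sentence over $\{+,\times,\leq,0,1\}$, giving (ii) and (iii).

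The one step that needs care --- and the real content of the argument --- is the weak quantifiers, where I would invoke Proposition~\ref{lemmaprop}: $\mA\models_{\X}\existso x\theta$ holds iff $\mA\models_{\X}\theta_{(a/x)}$ for some $a\in A$, and dually for $\forallo x$. Since $A$ is finite this is a finite disjunction (respectively conjunction) over the $n$ named constants, and, crucially, the team $\X$ is left untouched, so no new real variables appear; one therefore sets $(\existso x\theta)^{\flat}:=\bigvee_{a\in A}(\theta_{(a/x)})^{\flat}$ and $(\forallo x\theta)^{\flat}:=\bigwedge_{a\in A}(\theta_{(a/x)})^{\flat}$, and the recursion terminates because each substitution removes a quantifier occurrence. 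To state this cleanly the recursion should be run not on $\phi$ alone but on its subformulas $\theta$ together with an assignment to the variables that $\phi$ quantifies over the occurrence of $\theta$ (equivalently, on formulas whose free variables lie in $\{v_1,\dots,v_k\}$ once substituted constants are stripped); using Proposition~\ref{varprop} to rename bound variables apart from $v_1,\dots,v_k$ and from one another keeps this bookkeeping straightforward. That bookkeeping, rather than any hard mathematics, is the main obstacle.

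Finally, for the equivalence with satisfiability: given a nonempty $\X$ with $\mA\models_{\X}\phi$, Proposition~\ref{locality} lets us replace it by $\X\restriction\Var(\phi)$, which is still nonempty, and then $p_s:=(\X\restriction\Var(\phi))(s)$ satisfies the body of $\psi$ by the translation; conversely, a real assignment satisfying $\psi$ is, by its first two conjuncts, a nonempty probabilistic team over $\Var(\phi)$ at which $\phi^{\flat}$ holds, so $\mA\models_{\X}\phi$. This establishes (i)--(iii).
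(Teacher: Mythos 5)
Your proposal is correct and follows the same overall skeleton as the paper's proof: fix $\mA$, restrict by locality to teams over $\Var(\phi)$, introduce one existentially quantified real variable per assignment, add non-negativity and non-emptiness constraints, and translate the atoms by reading off the concrete index sets $S_i$ from the finite structure. The one place where you genuinely diverge is the quantifier case. The paper keeps the recursion on the original formula and models $\X(a/x)$ explicitly: for $\existso x\,\theta_0$ it existentially quantifies a fresh block of real variables $t_{\bar v x=\bar i j}$, constrains them to agree with the $s_{\bar v=\bar i}$ on one slice $j$ and to vanish elsewhere, and recurses on $\theta_0$ over the new block (dually, a conjunction over $j$ for $\forallo$). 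You instead invoke Proposition~\ref{lemmaprop} to replace $\mA\models_{\X(a/x)}\theta$ by $\mA\models_{\X}\theta_{(a/x)}$ and turn the weak quantifiers into finite disjunctions/conjunctions over named constants, so the set of real variables stays fixed at $|A|^{|\Var(\phi)|}$ throughout. Both are sound; your route is arguably more elementary (no bookkeeping relating two blocks of weight variables, and for part~(i) it makes it immediately visible that only $\{+,\leq,0\}$ is ever needed), at the cost of carrying substituted constants through the recursion, which is exactly the bookkeeping you flag and which Propositions~\ref{lemmaprop} and~\ref{varprop} handle. One cosmetic remark: your extra disjunct $\bigwedge_s p_s\leq 0$ in the translation of $\wcn\theta$ is harmless but redundant, since the outer sentence already forces the team to be nonempty and the total mass is preserved under the quantifier steps; the paper simply uses $\neg\theta^{*}$ there.
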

\begin{proof} Without loss of generality, we may assume that $A=\{1,\dots,n\}$. Let $\bar{v}=(v_1,\dots ,v_m)$ be a tuple that consists of the first-order variables that appear free in $\phi$. Since $\FOPTinc$, $\FOPTinccond$, and $\FOPTcondineq$ are local, it suffices to consider teams over $\{v_1,\dots ,v_m\}$. Moreover, by Proposition \ref{varprop}, it suffices to only consider formulas $\phi(\bar{v})$ in which there are no bound occurrences of the variables $\bar{v}$. For the tuple $\bar{v}$, we will need a fresh first-order variable $s_{\bar{v}=\bar{i}}$ for each $\bar{i}\in A^m$. Each variable $s_{\bar{v}=\bar{i}}$ will correspond to the weight of the assignment that interprets variables $\bar{v}$ as elements $\bar{i}$. By $\bar{s}$, we denote the tuple $(s_{\bar{v}=\bar{1}},\dots  ,s_{\bar{v}=\bar{n}})$ that contains all these variables. Now we define

\[
\psi:=\exists s_{\bar{v}=\bar{1}}\dots  s_{\bar{v}=\bar{n}}\left(\bigwedge_{\bar{i}} 0\leq s_{\bar{v}=\bar{i}}\wedge \neg 0=\sum_{\bar{i}} s_{\bar{v}=\bar{i}}\wedge\phi^*(\bar{s})\right) ,
\]
where $\phi^*(\bar{s})$ is defined inductively as follows:
\begin{itemize}
\item[-] If $\phi(\bar{v})=\delta$, then $\phi^*(\bar{s}):=\bigwedge_{s\in S}s=0$, where 
\[
S=\{s\in\{s_{\bar{v}=\bar{1}},\dots , s_{\bar{v}=\bar{n}}\}\mid\mA\not\models_s\delta\}.
\]
\item[-] If $\phi(\bar{v})=\delta_0(\bar{v})\leq\delta_1(\bar{v})$, then 
\[
\phi^*(\bar{s}):=\sum_{s\in S_0} s\leq\sum_{s\in S_1} s,
\]
where $S_i=\{s\mid \mA\models_s\delta_i\}$ for $i=0,1$.
\item[-] If $\phi(\bar{v})=\pci{\delta_0(\bar{v})}{\delta_1(\bar{v})}{\delta_2(\bar{v})}$, then 
\[
\phi^*(\bar{s}):=\sum_{s\in S_0\cap S_1} s\times \sum_{s\in S_0\cap S_2} s = \sum_{s\in S_0} s \times \sum_{s\in S_0\cap S_1\cap S_2} s,
\]
where $S_i=\{s\mid \mA\models_s\delta_i\}$ for $i=0,1,2$.
\item[-] If $\phi(\bar{v})=(\delta_0(\bar{v})\mid\delta_1(\bar{v}))\leq(\delta_2(\bar{v})\mid\delta_3(\bar{v}))$, then 
\[
\phi^*(\bar{s}):=\sum_{s\in S_0\cap S_1} s\times \sum_{s\in S_3} s \leq \sum_{s\in S_2\cap S_3} s \times \sum_{s\in S_1} s,
\]
where $S_i=\{s\mid \mA\models_s\delta_i\}$ for $i=0,1,2,3$.
\item[-] If $\phi(\bar{v})=\wcn\theta_0(\bar{v})$, then $\phi^*(\bar{s}):=\neg\theta_0^*(\bar{s})$.
\item[-] If $\phi(\bar{v})=\theta_0(\bar{v})\wedge\theta_1(\bar{v})$, then $\phi^*(\bar{s}):=\theta_0^*(\bar{s})\wedge\theta_1^*(\bar{s})$.
\item[-] If $\phi(\bar{v})=\theta_0(\bar{v})\vvee\theta_1(\bar{v})$, then $\phi^*(\bar{s}):=\theta_0^*(\bar{s})\vee\theta_1^*(\bar{s})$.
\item[-] If $\phi(\bar{v})=\exists x\theta_0(\bar{v},x)$, then 
\begin{align*}
\phi^*(\bar{s}):=\exists t_{\bar{v}x=\bar{1}1} \dots t_{\bar{v}x=\bar{n}n}\biggl(\bigvee_{j}\bigwedge_{\bar{i}}( t_{\bar{v}x=\bar{i}j}=s_{\bar{v}=\bar{i}}\wedge\bigwedge_{k\neq j}t_{\bar{v}x=\bar{i}k}=0)\wedge\theta_0^*(\bar{t})\biggr).
\end{align*}
\item[-] If $\phi(\bar{v})=\forall x\theta_0(\bar{v},x)$, then 
\begin{align*}
\phi^*(\bar{s}):=\bigwedge_{j}\biggl(\exists t_{\bar{v}x=\bar{1}1} \dots t_{\bar{v}x=\bar{n}n}\biggl(\bigwedge_{\bar{i}}( t_{\bar{v}x=\bar{i}j}=s_{\bar{v}=\bar{i}}\wedge\bigwedge_{k\neq j}t_{\bar{v}x=\bar{i}k}=0)\wedge\theta_0^*(\bar{t})\biggr)\biggr).
\end{align*}
\end{itemize}
\end{proof}
Let $\mathcal{L}$ be a logic. We denote by $\textsc{SAT}(\mathcal{L})$ and $\textsc{VAL}(\mathcal{L})$ the satisfiability and the validity problems for $\mathcal{L}$, respectively. 
\begin{theorem}
The satisfiability problem for $\FOPTcondineq$ is \textsc{RE}-complete.
\end{theorem}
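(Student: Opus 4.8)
The plan is to prove $\textsc{RE}$-completeness by treating membership and hardness separately. For membership in $\textsc{RE}$, I would convert Theorem~\ref{constr_psi}(iii) into a semi-decision procedure, using Tarski's theorem that the first-order theory of $(\mathbb{R},+,\times,\leq,0,1)$ is decidable. Given a $\tau$-formula $\phi\in\FOPTcondineq$ with $\tau$ finite relational, the procedure enumerates the finite $\tau$-structures $\mA$: for a fixed finite relational vocabulary there are, up to isomorphism, only finitely many structures of each cardinality, they can be listed effectively, and by locality of $\FOPTcondineq$ whether $\phi$ is satisfiable in $\mA$ depends only on the isomorphism type of $\mA$. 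For each such $\mA$ the procedure computes the sentence $\psi_{\phi,\mA}$ over $\{+,\times,\leq,0,1\}$ furnished by Theorem~\ref{constr_psi}(iii) --- the inductive construction in that proof is visibly computable from $\phi$ and $\mA$ --- and decides whether $(\mathbb{R},+,\times,\leq,0,1)\models\psi_{\phi,\mA}$ with the decision procedure for real closed fields, halting and accepting as soon as one structure answers affirmatively. By Theorem~\ref{constr_psi}(iii) this procedure accepts exactly when $\phi$ is satisfiable, so $\textsc{SAT}(\FOPTcondineq)\in\textsc{RE}$.

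For $\textsc{RE}$-hardness, I would reduce from the finite satisfiability problem of first-order logic, which is $\textsc{RE}$-complete by Trakhtenbrot's theorem. Given a first-order $\tau$-sentence $\phi$, first compute an equi-expressive $\FOT[\tau]$-sentence $\phi'$; this is effective because the equivalence $\FOT\equiv\FO$ on sentences \cite{fot19} comes with explicit translations. Then apply the translation from the proposition showing that $\FOPTinc$ is a generalization of $\FOT$: inspection of its construction shows that for a $\FOT$-sentence the resulting $\psi_{\phi'}$ is built only from $\delta$-formulas together with $\wcn$, $\wedge$, $\vvee$, $\existso$ and $\forallo$, hence is already a legitimate $\FOPTcondineq[\tau]$-sentence, and it satisfies $\mA\models_{\supp(\X)}\phi'\iff\mA\models_\X\psi_{\phi'}$ for every finite $\tau$-structure $\mA$ and probabilistic team $\X$. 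Since $\phi'$ is a sentence, any probabilistic team relevant to it has the empty set of variables as domain, so a nonempty such team has support $\{\varnothing\}$, and $\mA\models_{\{\varnothing\}}\phi'$ holds precisely when $\mA\models\phi$ in the ordinary sense. Chaining these equivalences, $\psi_{\phi'}$ is $\FOPTcondineq$-satisfiable iff some finite $\tau$-structure is a model of $\phi$, which is precisely the statement that $\phi$ has a finite model. Since $\phi\mapsto\psi_{\phi'}$ is computable this is a many-one reduction, so $\textsc{SAT}(\FOPTcondineq)$ is $\textsc{RE}$-hard, and with the first paragraph $\textsc{RE}$-complete.

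I expect the hardness direction to require the most care, the membership part being essentially a corollary of Theorem~\ref{constr_psi}(iii) and Tarski's theorem. The delicate points there are reconciling the classical notion ``$\phi$ has a finite model'' with the probabilistic-team notion of satisfiability --- tracking nonemptiness of the team, the role of $\supp(\X)$, and the empty variable domain of a sentence --- and checking that both translations in the chain, first-order to $\FOT$ and $\FOT$ to $\FOPTinc$, are genuinely effective and that their composition stays within the grammar of $\FOPTcondineq$.
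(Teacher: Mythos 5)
Your proposal is correct and follows essentially the same route as the paper: membership in \textsc{RE} via the computable translation of Theorem~\ref{constr_psi}(iii) combined with the decidability of real arithmetic and enumeration of finite structures, and \textsc{RE}-hardness via Trakhtenbrot's theorem together with an embedding of first-order sentences into $\FOPTcondineq$. The only difference is that you make the embedding explicit by routing through \FOT and the generalization proposition, whereas the paper simply asserts that every first-order sentence is expressible in $\FOPTcondineq$; your version is a more detailed justification of the same step.
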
 
\begin{proof}
\underline{Inclusion}: Suppose that $\phi\in\FOPTcondineq[\tau]$ is satisfiable. Let $\mA$ be any finite $\tau$-structure. By Theorem \ref{constr_psi}, we can construct a sentence $\psi_{\mA,\phi}$ such that $\phi$ is satisfiable in $\mA$ iff $(\mathbb{R},+,\times,\leq,0,1)\models\psi_{\mA,\phi}$. Note that the sentence $\psi_{\mA,\phi}$ is computable since $\mA\models_s\delta$ is decidable when structure $\mA$, assignment $s$, and formula $\delta$ are given. Since truth in real arithmetic is decidable, given a structure $\mA$, we can also decide whether $\phi$ is satisfiable in $\mA$. Thus we can  verify that $\phi$ is satisfiable by going through all finite $\tau$-structures until we find a structure $\mA$ such that $\phi$ is satisfiable in $\mA$. 

\underline{Hardness}: Notice that every first-order sentence is also expressible in $\FOPTcondineq$, and therefore $\textsc{SAT}(\logicFont{FO})$ (in the finite) is reducible to $\textsc{SAT}(\FOPTcondineq)$. By Trahtenbrot's Theorem, the halting problem is reducible to $\textsc{SAT}(\logicFont{FO})$. Since the halting problem is \textsc{RE}-complete, $\textsc{SAT}(\FOPTcondineq)$ is \textsc{RE}-hard.
\end{proof}
\begin{theorem}\label{val}
The validity problem for $\FOPTcondineq$ is \emph{co-}\textsc{RE}-complete.
\end{theorem}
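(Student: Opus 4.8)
The plan is to dualise the previous theorem. For the co-\textsc{RE} upper bound, the key observation is that since $\FOPTcondineq$ is closed under the weak negation $\wcn$, and $\wcn$ captures failure precisely on nonempty teams, non-validity of $\phi$ coincides with satisfiability of $\wcn\phi$. Concretely, I would first record that for any finite $\tau$-structure $\mA$ and any probabilistic team $\X$ with $\Var(\phi)\subseteq\Dom(\X)$ we have $\mA\models_\X\wcn\phi$ iff ($\mA\not\models_\X\phi$ or $\supp(\X)=\varnothing$). Combining this with the fact that empty probabilistic teams satisfy every formula and with locality (Proposition~\ref{locality}, which the text already notes extends to $\FOPTcondineq$), I would prove: \emph{$\phi$ is not valid iff $\wcn\phi$ is satisfiable.} For the forward direction, if some finite $\mA$ and some probabilistic team $\X$ over $\Var(\phi)$ satisfy $\mA\not\models_\X\phi$, then $\X$ cannot be empty, so $\X$ is a nonempty team with $\mA\models_\X\wcn\phi$, witnessing satisfiability of $\wcn\phi$. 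For the converse, a nonempty team $\Y$ with $\mA\models_\Y\wcn\phi$ satisfies $\mA\not\models_\Y\phi$, and by locality its restriction $\Y\restriction\Var(\phi)$ is a nonempty team over $\Var(\phi)$ on which $\phi$ fails. Since $\phi\mapsto\wcn\phi$ is computable and $\textsc{SAT}(\FOPTcondineq)$ is \textsc{RE} by the previous theorem, non-validity is \textsc{RE}, hence $\textsc{VAL}(\FOPTcondineq)$ is co-\textsc{RE}.

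For co-\textsc{RE}-hardness I would reduce from the complement of the halting problem, mirroring the hardness argument used for $\textsc{SAT}$. Every first-order sentence is syntactically a $\FOPTcondineq$ sentence, and for a sentence $\phi$ (so $\Var(\phi)=\varnothing$) one checks directly from the semantics that $\mA\models_\X\phi$ for all probabilistic teams $\X$ of $\mA$ over $\varnothing$ exactly when $\mA\models\phi$ classically: the only teams over the empty domain are the empty team (which satisfies everything) and teams giving the empty assignment positive weight (on which $\mA\models_\X\phi$ reduces to $\mA\models\phi$). Hence a first-order sentence is valid in all finite structures in the classical sense iff it is valid in the sense of $\FOPTcondineq$, so $\textsc{VAL}(\logicFont{FO})$ over finite structures reduces to $\textsc{VAL}(\FOPTcondineq)$. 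By Trahtenbrot's Theorem one can, given a Turing machine $M$, compute a first-order sentence $\phi_M$ that is satisfiable in a finite structure iff $M$ halts; then $\neg\phi_M$ is valid in all finite structures iff $M$ does not halt, so the non-halting problem reduces to $\textsc{VAL}(\logicFont{FO})$ over finite structures. As the non-halting problem is co-\textsc{RE}-complete, $\textsc{VAL}(\FOPTcondineq)$ is co-\textsc{RE}-hard, and together with the first part this gives co-\textsc{RE}-completeness.

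I do not expect a genuine obstacle here; the argument is essentially bookkeeping. The one point that requires care is the treatment of empty (and $\varnothing$-domain) probabilistic teams: one must make sure that ``$\phi$ is not valid'' is correctly expressed as \emph{satisfiability} of $\wcn\phi$ rather than as the truth of some formula that would also have to assert nonemptiness of the team. This works out exactly because the notion of satisfiability already builds in that the witnessing team is nonempty, which is precisely what rules out the vacuous case $\supp(\X)=\varnothing$ in the semantics of $\wcn$.
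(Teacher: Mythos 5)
Your proof is correct, and the upper bound is exactly the paper's argument: validity of $\phi$ is dual, via $\wcn$, to unsatisfiability of $\wcn\phi$, and the built-in nonemptiness requirement in the definition of satisfiability is precisely what absorbs the $\supp(\X)=\varnothing$ escape clause in the semantics of $\wcn$ (your explicit treatment of this point, including the locality step restricting the witnessing team to $\Var(\phi)$, is in fact slightly more careful than the paper's chain of equivalences, which quantifies over all $\mA$ and $\X$ without pausing on empty teams). The only divergence is in the hardness half. The paper uses the $\wcn$ duality a second time, in the other direction ($\phi$ is unsatisfiable iff $\wcn\phi$ is valid), so that $\overline{\textsc{SAT}}(\FOPTcondineq)$ and $\textsc{VAL}(\FOPTcondineq)$ reduce to each other and co-\textsc{RE}-hardness is inherited wholesale from the \textsc{RE}-completeness of $\textsc{SAT}(\FOPTcondineq)$; you instead rerun the Trahtenbrot argument directly, reducing non-halting to finite first-order validity and then embedding first-order sentences into $\FOPTcondineq$. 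Both are sound and ultimately rest on the same Trahtenbrot reduction; the paper's version is marginally more economical since it reuses the previous theorem as a black box, while yours has the small merit of spelling out why first-order sentences keep their classical meaning under the team semantics (singleton-support teams make $\existso,\forallo$ behave classically), a fact the paper also relies on but only states in the $\textsc{SAT}$ hardness proof.
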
 
\begin{proof}
Denote by $\textsc{VAL}(\FOPTcondineq)$ the set of $\tau$-formulas that are valid, and by $\overline{\textsc{SAT}}(\FOPTcondineq)$ the set of $\tau$-formulas that are not satisfiable. Then 
\begin{align*}
\phi\in\textsc{VAL}(\FOPTcondineq)\iff & \mA\models_{\X}\phi \text{ for all } \mA \text{ and } \X\\
\iff &\mA\not\models_{\X}\wcn\phi \text{ for all } \mA \text{ and } \X\\
\iff & \wcn\phi\in\overline{\textsc{SAT}}(\FOPTcondineq)
\end{align*}
and
\begin{align*}
\phi\in\overline{\textsc{SAT}}(\FOPTcondineq)
\iff & \mA\not\models_{\X}\phi \text{ for all } \mA \text{ and } \X\\
\iff &\mA\models_{\X}\wcn\phi \text{ for all } \mA \text{ and } \X\\
\iff & \wcn\phi\in\textsc{VAL}(\FOPTcondineq).
\end{align*}
Thus $\textsc{VAL}(\FOPTcondineq)$ is reducible to $\overline{\textsc{SAT}}(\FOPTcondineq)$, and vice versa. Since $\textsc{SAT}(\FOPTcondineq)$ is \textsc{RE}-complete, $\overline{\textsc{SAT}}(\FOPTcondineq)$ is co-\textsc{RE}-complete, and therefore $\textsc{VAL}(\FOPTcondineq)$ is also co-\textsc{RE}-complete.
\end{proof}

\section{Counterparts of logics in probabilistic team semantics over metafinite structures}\label{FOrxsum}

In this section, we define two-sorted structures, and the logic $\FOrxsum$. We also show that there is no translation from $\FOrxsum$ to $\FOPTinccond$, and define a fragment $\FOrsum$ which is equi-expressive with the logic $\FOPTinc$.
\begin{defn}[A two-sorted structure]
Let $\tau_0$, $\tau_1$, and $\sigma$ be vocabularies such that $\sigma$ is functional, and $\tau_0\cap\sigma=\tau_1\cap\sigma=\varnothing$. A two-sorted structure of vocabulary $\tau_0\cup\tau_1\cup\sigma$ is a tuple $\mA=(\mA_0,\mA_1,F)$ where $\mA_i$ is a $\tau_i$-structure of domain $A_i$ for $i=0,1$, and $F$ is a set that contains functions $f^{\mA}\colon A_0^{\ar(f)}\to A_1$ for each function symbol $f\in\sigma$.
\end{defn}
In this paper, we always assume that the structure $\mA_0$ is finite, 
and both $\sigma$ and $F$ are finite. For simplicity, we also assume that $\tau_0$
only contains relation and constant symbols. Note that $\mA_1$ is not assumed to be finite, on the contrary, we consider metafinite structures where $A_1=\mathbb{R}_{\geq 0}$ or $A_1=\mathbb{R}$.

We let $\{=\}\subseteq\tau_0$, $\tau_1=\{\leq\}$, and $\sigma=\{f\}$. We consider structures $\mA=(\mA_0,\mA_1,F)$ where $\mA_0$ is a finite $\tau_0$-structure, $\mA_1=(\mathbb{R}_{\geq 0},\leq)$, and $F=\{f^{\mA}\}$ for some $f^{\mA}\colon A_0\to\mathbb{R}_{\geq 0}$. We call these structure $\mathbb{R}_{\geq 0}$-structures. We define a logic $\FOrxsum$ on $\mathbb{R}_{\geq 0}$-structures. First-order $\tau_0$-terms and atomic formulas are defined in the usual way. Let $\lambda$ be a first-order atomic $\tau_0$-formula, and define
\[
\gamma\ddfn\lambda\mid \neg\gamma \mid \gamma\wedge\gamma \mid \gamma\lor\gamma.
\]
Then, in addition to the usual $\tau_0$-terms, we have \textit{numerical} $\tau_0\cup\sigma$-terms $i$ which are defined as follows:
\[
i\ddfn f(\bar{y})\mid i\times i\mid\SUM_{\bar{x}}(i,\gamma),
\]
where $\bar{x}$ and $\bar{y}$ are tuples of variables and $|\bar{y}|=\ar(f)$. If $|\bar{x}|=0$, we denote $\SUM_{\bar{x}}(i,\gamma)=\SUM_{\varnothing}(i,\gamma)$.
The logic $\FOrxsum$ over a vocabulary $\tau_0\cup\tau_1\cup\sigma$ is then defined as follows:
\[
\phi\ddfn\lambda \mid i\leq i \mid \neg\phi  \mid \phi\wedge\phi \mid \phi\lor\phi \mid \exists x\phi \mid \forall x\phi,
\]
where $x$ is a first-order variable.

We now define the semantics for $\FOrxsum$. Let $\mA$ be an $\mathbb{R}_{\geq 0}$-structure of a vocabulary $\tau_0\cup\tau_1\cup\sigma$. The interpretations of $\tau_0$-terms are defined in the usual way. Note that first-order terms only range over $A_0$; they cannot take values from ${\mathbb{R}_{\geq 0}}$. For the numerical terms we define interpretations $[f(\bar{x})]_s^\mathcal{A}:=f^\mathcal{A}(s(\bar{x}))$,
\[
[i\times j]_s^\mathcal{A}:=[i]_s^\mathcal{A}\cdot [j]_s^\mathcal{A},
\]
and
\[
[\SUM_{\bar{x}}(i,\gamma)]_s^\mathcal{A}:=\sum_{\bar{a}\in B}[i]_{s(\bar{a}/\bar{x})}^\mathcal{A},
\]
where $B=\{\bar{a}\in A_0^{|\bar{x}|}\mid \mA_0\models_s\gamma(\bar{a}/\bar{x})\}$. 
The semantics for $\leq$ is defined in the obvious way, i.e.
\[
\mA\models_s i\leq j \iff [i]_s^\mathcal{A}\leq [j]_s^\mathcal{A}.
\]
For atomic $\tau_0$-formulas and connectives $\neg$, $\wedge$, $\lor$, $\exists x$, and $\forall x$, we define semantics as in first-order logic.

In Theorem \ref{team2fo}, we present a translation from $\FOPTcondineq$ to $\FOrxsum$.
However, Theorem \ref{notransl} shows that there is no full translation from $\FOrxsum$ to $\FOPTcondineq$.
On the other hand, there is a fragment of $\FOrxsum$ which is equi-expressive with the logic $\FOPTinc$ on $\mathbb{R}_{\geq 0}$-structures. (See Section \ref{equi}.) The fragment is denoted by $\FOrsum$, and defined as 
\[
\phi\ddfn\lambda \mid \neg\phi \mid \SUM_{\bar{x}}(f(\bar{y}),\gamma)\leq \SUM_{\bar{x}}(f(\bar{y}),\gamma) \mid \phi\wedge\phi \mid \phi\lor\phi \mid \exists x\phi \mid \forall x\phi
\]
where $\lambda$ and $\gamma$ are defined as before, and $\bar{x}$ and $\bar{y}$ are tuples of distinct variables such that $\Var(\bar{x})\subseteq\Var(\bar{y})$ and $|\bar{y}|=\ar(f)$. Note that despite the restricted syntax of the fragment, we can still refer to $f^{\mA}(s(\bar{y}))$ (and also the constant 0). For this, we notice that the set $A_0^{|\varnothing|}=A_0^0$ is the singleton containing only the empty tuple, and therefore
\[
[\SUM_{\varnothing}(f(\bar{y}),\gamma)]_s^\mathcal{A}=
\begin{cases}
f^{\mA}(s(\bar{y})), &\text{ when } \mA\models_s\gamma\\
0, &\text{ when } \mA\not\models_s\gamma.
\end{cases}
\]
Additionally, we define a useful abbreviation
\begin{align*}
i=j := &i\leq j\wedge j\leq i,
\end{align*}
and write  $f(\bar{u})=0$ for the formula 
\begin{align*}
&\SUM_{\varnothing}(f(\bar{u}),u_1=u_1)=\SUM_{\varnothing}(f(\bar{u}),\neg u_1=u_1),
\end{align*}
where $\bar{u}=(u_1,\dots ,u_k)$. Note that $[\SUM_{\varnothing}(f(\bar{u}),u_1=u_1)]_s^{\mA}=f^{\mA}(s(\bar{u}))$ and $[\SUM_{\varnothing}(f(\bar{u}),\neg u_1=u_1)]_s^{\mA}=0$, and thus
\[
\mA\models_sf(\bar{u})=0\iff f^{\mA}(s(\bar{u}))=0
\]
as one would expect.

\section{Translations and the equi-expressivity result}

\subsection{Translation from $\FOPTcondineq$ to $\FOrxsum$}\label{translation}

\begin{theorem}\label{team2fo}
Let $\phi(v_1,\dots ,v_k)$ be any $\FOPTcondineq[\tau_0]$-formula and $f$ a $k$-ary function symbol. Then there exists an $\FOrxsum[\tau_0\cup\{\leq\}\cup\{f\}]$-sentence $\psi_{\phi}(f)$ such that for any $\mathbb{R}_{\geq 0}$-structure $\mA=(\mA_0,\mA_1,\{f_{\X}\})$ and any probabilistic team $\X$ over $\{v_1,\dots ,v_k\}$
\[
\mA_0\models_{\X}\phi(\bar{v})\iff \mA\models\psi_{\phi}(f),
\]
where $f_{\X}\colon A_0^k\to\mathbb{R}_{\geq 0}$ is a function such that $f_{\X}(s(\bar{v}))=\X(s)$ for all $s\in X$.
\end{theorem}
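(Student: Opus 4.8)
The plan is to construct $\psi_\phi$ by induction on $\phi$; to keep the induction manageable I would prove the stronger statement that for every subformula $\theta$ of $\phi$ and every tuple $\bar u=(u_1,\dots,u_m)$ of pairwise distinct variables with $\Var(\theta)\subseteq\Var(\bar u)$ there is an $\FOrxsum[\tau_0\cup\{\leq\}\cup\{g\}]$-sentence $\psi_\theta^{\bar u}(g)$, with $g$ an $m$-ary function symbol, such that, writing $\mA^h:=(\mA_0,(\mathbb{R}_{\geq 0},\leq),\{h\})$ and letting $\Y^h$ be the maximal probabilistic team over $\Var(\bar u)$ with $\Y^h(s)=h(s(\bar u))$,
\[
\mA_0\models_{\Y^h}\theta\iff\mA^h\models\psi_\theta^{\bar u}(g)\qquad\text{for every }h\colon A_0^m\to\mathbb{R}_{\geq 0}.
\]
By locality (Proposition~\ref{locality}, which extends to $\FOPTcondineq$) this yields the theorem on taking $\theta=\phi$, $\bar u=\bar v$, $g=f$, and by Proposition~\ref{varprop} I would first rename bound variables so that they are pairwise distinct and disjoint from $\Var(\bar v)$, which frees up fresh first-order variables for the quantifier clauses.

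The base and propositional cases should be routine, as $\SUM$ directly mirrors the sums occurring in the probabilistic semantics. For $\theta=\delta$ I would set $\psi_\delta^{\bar u}(g):=\forall u_1\cdots\forall u_m\,(g(\bar u)=0\lor\delta)$, using that $\supp(\Y^h)$ is exactly the set of assignments on which $g$ is nonzero, together with the abbreviations $i=j$ and $g(\bar u)=0$ from Section~\ref{FOrxsum}. For $\theta=(\delta_0\mid\delta_1)\leq(\delta_2\mid\delta_3)$ I would set
\[
\psi_\theta^{\bar u}(g):=\SUM_{\bar u}(g(\bar u),\delta_0\wedge\delta_1)\times\SUM_{\bar u}(g(\bar u),\delta_3)\leq\SUM_{\bar u}(g(\bar u),\delta_2\wedge\delta_3)\times\SUM_{\bar u}(g(\bar u),\delta_1),
\]
observing that $s\mapsto s(\bar u)$ is a bijection between the team over $\Var(\bar u)$ and $A_0^m$, so the interpretation of $\SUM_{\bar u}(g(\bar u),\gamma)$ equals $\sum\{\Y^h(s):\mA_0\models_s\gamma\}$; this is exactly the semantics of the atom. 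The propositional cases use the induction hypothesis with the same $\bar u$: $\psi_{\theta_0\wedge\theta_1}^{\bar u}(g):=\psi_{\theta_0}^{\bar u}(g)\wedge\psi_{\theta_1}^{\bar u}(g)$, $\psi_{\theta_0\vvee\theta_1}^{\bar u}(g):=\psi_{\theta_0}^{\bar u}(g)\lor\psi_{\theta_1}^{\bar u}(g)$, and $\psi_{\wcn\theta_0}^{\bar u}(g):=\neg\psi_{\theta_0}^{\bar u}(g)\lor\forall u_1\cdots\forall u_m\,(g(\bar u)=0)$, the last disjunct capturing $\supp(\Y^h)=\varnothing$.

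The quantifier clauses are where the real difficulty lies, and the main obstacle is that $\FOrxsum$ has no second-order quantification with which to range over the auxiliary weight functions that the semantics of $\existso$/$\forallo$ implicitly introduces. The way around this is to exploit that weak quantification modifies a team only trivially: for $\theta=\existso x\,\theta_0$ over $\bar u$ with $x$ fresh and $a\in A_0$, one has $f_{\Y^h(a/x)}(\bar c,b)=h(\bar c)$ if $b=a$ and $0$ otherwise, so the $(m+1)$-ary weight function of $\Y^h(a/x)$ is \emph{first-order definable} from $h$ and the single element $a$ by means of the numerical term $\SUM_\varnothing(g(t_1,\dots,t_m),\,t_{m+1}=w)$ (which evaluates to $g(t_1,\dots,t_m)$ when $t_{m+1}$ and $w$ get the same value, and to $0$ otherwise). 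I would therefore take $\psi_{\theta_0}^{\bar u x}(g')$, the sentence already produced for $\theta_0$ with a fresh $(m+1)$-ary symbol $g'$, form $\widehat\psi_{\theta_0}(g,w)$ by replacing every subterm $g'(t_1,\dots,t_{m+1})$ with $\SUM_\varnothing(g(t_1,\dots,t_m),\,t_{m+1}=w)$, and set $\psi_{\existso x\theta_0}^{\bar u}(g):=\exists w\,\widehat\psi_{\theta_0}(g,w)$ and $\psi_{\forallo x\theta_0}^{\bar u}(g):=\forall w\,\widehat\psi_{\theta_0}(g,w)$. The correctness of this would follow from a substitution lemma, proved by simultaneous induction on numerical terms and $\FOrxsum$-formulas (and using the freshness of $w$, so that $w$ is never captured by a $\SUM$-quantifier of $\psi_{\theta_0}^{\bar u x}$): for $a=s(w)$, $\mA^h\models_s\widehat\psi_{\theta_0}(g,w)$ holds iff $\psi_{\theta_0}^{\bar u x}(g')$ holds in the structure interpreting $g'$ as $f_{\Y^h(a/x)}$. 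Together with the induction hypothesis for $\theta_0$ this makes the outer $\exists w$ (resp.\ $\forall w$) mean ``for some $a\in A_0$'' (resp.\ ``for all $a\in A_0$''), exactly matching $\existso x$ (resp.\ $\forallo x$). The empty team should need no separate argument: if $\X$ is empty then $f_\X\equiv0$, all $\SUM$-terms vanish and all $g(\bar u)=0$ disjuncts hold, so $\mA\models\psi_\phi(f)$ while also $\mA_0\models_\X\phi$.

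In short, the routine part is the base/propositional step, where $\SUM$ transcribes the probabilistic sums; the crux is the quantifier step, where the key insight is that $f_{\Y^h(a/x)}$ is first-order definable from $h$ and a single element, allowing the missing second-order quantifier to be replaced by an ordinary $\exists w$/$\forall w$ together with a term substitution whose soundness rests on the renaming supplied by Proposition~\ref{varprop}.
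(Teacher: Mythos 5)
Your proposal is correct, but it takes a genuinely different route from the paper in the one place that matters, the quantifier step. The paper keeps the weight function $k$-ary throughout: its induction invariant is about formulas $\psi_\theta(f,\bar x)$ with the quantified variables $\bar x$ left as ordinary first-order variables of the target formula, and it proves $\mA_0\models_{\X(\bar a/\bar x)}\theta\iff\mA\models\psi_\theta(f,\bar x)(\bar a/\bar x)$. Proposition~\ref{lemmaprop} (satisfaction over $\X(\bar a/\bar x)$ equals satisfaction of the constant-substituted formula over $\X$) then collapses all sums over the extended team back to sums over the original $k$-ary $f_\X$, with side conditions $\delta(\bar u/\bar v,\bar x)$ carrying the parameters; the clauses for $\existso y$ and $\forallo y$ become literally $\exists y\,\psi_{\theta_0}$ and $\forall y\,\psi_{\theta_0}$. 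You instead let the arity of the weight function grow with each quantifier and recover first-order quantification by noting that $f_{\Y^h(a/x)}$ is definable from $h$ and a single element via $\SUM_\varnothing(g(t_1,\dots,t_m),\,t_{m+1}=w)$, followed by a term substitution. This works --- the substituted term is well-formed in every position where $g'(\bar t)$ can occur, and freshness of $w$ protects it from capture by enclosing $\SUM$ binders --- but it costs you an extra simultaneous-induction substitution lemma that the paper's parameterization avoids entirely. What your version buys is an induction invariant that never mentions modified teams (only maximal teams induced by a weight function), which is conceptually tidy; what the paper's version buys is brevity, since Proposition~\ref{lemmaprop} does all the bookkeeping once and for all. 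Your handling of the atoms, of $\wcn$, and of the empty team agrees with the paper's.
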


\begin{proof} We show by induction that for any subformula $\theta(\bar{v},\bar{x})$ of $\phi(\bar{v})$, there exists an $\FOrxsum[\tau_0\cup\{\leq\}\cup\{f\}]$-formula $\psi_{\theta}(f,\bar{x})$ such that for any $\mathbb{R}_{\geq 0}$-structure $\mA=(\mA_0,\mA_1,\{f_{\X}\})$, any probabilistic team $\X$ over $\{v_1,\dots ,v_k\}$, and any sequence $\bar{a}\in A_0^{|\bar{x}|}$
\[
\mA_0\models_{\X(\bar{a}/\bar{x})}\theta(\bar{v},\bar{x})\iff \mA\models\psi_{\theta}(f,\bar{x})(\bar{a}/\bar{x}),
\]
where $f_{\X}\colon A_0^k\to\mathbb{R}_{\geq 0}$ is a function defined as above. Note that by Proposition \ref{varprop}, it suffices to only consider formulas $\phi(\bar{v})$ in which there are no bound occurrences of the variables $\bar{v}$.

\begin{itemize}
\item[(1)] Suppose that $\theta(\bar{v},\bar{x})=\delta(\bar{v},\bar{x})$. Then let $\psi_{\theta}(f,\bar{x}):=\forall \bar{u}(f(\bar{u})=0\lor\delta(\bar{u}/\bar{v},\bar{x}))$.

Now 
\begin{align*}
\mA_0\models_{\X(\bar{a}/\bar{x})}\delta(\bar{v},\bar{x})
\iff &\mA_0\models_{\X}\delta(\bar{v},\bar{x})_{(\bar{a}/\bar{x})}\quad (\text{by Prop. \ref{lemmaprop}})\\
\iff &\text{ for all } s\in X, \text{ if } s\in\supp(\X), \text{ then }\mA_0\models_{s}\delta(\bar{v},\bar{x})_{(\bar{a}/\bar{x})}\\
\iff &\text{ for all } \bar{b}\in A_0^k,\text{ }f_{\X}(\bar{b})=0\text{ or }\mA_0\models\delta(\bar{b}/\bar{v},\bar{x})(\bar{a}/\bar{x})\\
\iff &\mA\models\forall \bar{u}(f(\bar{u})=0\lor\delta(\bar{u}/\bar{v},\bar{x}))(\bar{a}/\bar{x}).
\end{align*}

\item[(2)] Suppose that $\theta(\bar{v},\bar{x})=(\delta_0|\delta_1)\leq(\delta_2|\delta_3)$. Then let 
\begin{align*}
\psi_{\theta}(f,\bar{x}):=&\SUM_{\bar{u}}(f(\bar{u}),(\delta_0\wedge\delta_1)(\bar{u}/\bar{v},\bar{x}))\times\SUM_{\bar{u}}(f(\bar{u}),\delta_3(\bar{u}/\bar{v},\bar{x}))\leq\\
&\SUM_{\bar{u}}(f(\bar{u}),(\delta_2\wedge\delta_3)(\bar{u}/\bar{v},\bar{x}))\times\SUM_{\bar{u}}(f(\bar{u}),\delta_1(\bar{u}/\bar{v},\bar{x})).
\end{align*}

Now
\begin{align*}
\mA_0\models_{\X(\bar{a}/\bar{x})}(\delta_0|\delta_1)\leq(\delta_2|\delta_3) \iff &\mA_0\models_{\X}((\delta_0|\delta_1)\leq(\delta_2|\delta_3))_{(\bar{a}/\bar{x})}\quad(\text{by Prop. \ref{lemmaprop}})\\
\iff &\sum_{s\in S_0\cap S_1}\X(s)\cdot\sum_{s\in S_3}\X(s)\leq\sum_{s\in S_2\cap S_3}\X(s)\cdot\sum_{s\in S_1}\X(s),\\
&\text{ where } S_i=\{s\in X\mid \mA_0\models_s{\delta_i}_{(\bar{a}/\bar{x})}\}\text{ for } i=0,1,2,3\\
\iff &\sum_{\bar{b}\in B_0\cap B_1}f_{\X}(\bar{b})\cdot\sum_{\bar{b}\in B_3}f_{\X}(\bar{b})\leq\sum_{\bar{b}\in B_2\cap B_3}f_{\X}(\bar{b})\cdot\sum_{\bar{b}\in B_1}f_{\X}(\bar{b}),\\
&\text{ where } B_i=\{\bar{b}\in A_0^k\mid \mA_0\models\delta_i(\bar{b}/\bar{v},\bar{a}/\bar{x})\}\text{ for } i=0,1,2,3\\
\iff &\mA\models\psi_{\theta}(f,\bar{x})(\bar{a}/\bar{x}).
\end{align*}

\item[(3)] Suppose that $\theta(\bar{v},\bar{x})=\wcn\theta_0(\bar{v},\bar{x})$. Then let 
\[
\psi_{\theta}(f,\bar{x}):=\neg\psi_{\theta_0}(f,\bar{x})\lor\forall\bar{u}f(\bar{u})=0.
\]

Now
\begin{align*}
\mA_0\models_{\X(\bar{a}/\bar{x})}\wcn\theta_0(\bar{v},\bar{x})
\iff &\mA_0\models_{\X}\wcn\theta_0(\bar{v},\bar{x})_{(\bar{a}/\bar{x})}\quad(\text{by Prop. \ref{lemmaprop}})\\
\iff &\mA_0\not\models_{\X}\theta_0(\bar{v},\bar{x})_{(\bar{a}/\bar{x})} \text{ or } \supp(\X)=\varnothing\\
\iff &\mA_0\not\models_{\X(\bar{a}/\bar{x})}\theta_0(\bar{v},\bar{x}) \text{ or } f_{\X}(\bar{b})=0 \text{ for all } \bar{b}\in A_0^k\quad(\text{by Prop. \ref{lemmaprop}})\\
\iff &\mA\not\models\psi_{\theta_0}(f,\bar{x})(\bar{a}/\bar{x}) \text{ or } f_{\X}(\bar{b})=0\text{ for all } \bar{b}\in A_0^k \quad\text{(by the induction hypothesis)}\\
\iff &\mA\models\neg\psi_{\theta_0}(f,\bar{x})(\bar{a}/\bar{x}) \text{ or } \mA\models\forall\bar{u}f(\bar{u})=0\\
\iff &\mA\models(\neg\psi_{\theta_0}(f,\bar{x})\lor\forall\bar{u}f(\bar{u})=0)(\bar{a}/\bar{x}).
\end{align*}

\item[(4)] Suppose that $\theta(\bar{v},\bar{x})=\theta_0(\bar{v},\bar{x})\wedge\theta_1(\bar{v},\bar{x})$, where $\theta(\bar{v},\bar{x})$ does not belong to the item (1). Then let $\psi_{\theta}(f,\bar{x}):=\psi_{\theta_0}(f,\bar{x})\wedge\psi_{\theta_1}(f,\bar{x})$. The claim directly follows from the induction hypothesis.

\item[(5)] Suppose that $\theta(\bar{v},\bar{x})=\theta_0(\bar{v},\bar{x})\vvee\theta_1(\bar{v},\bar{x})$. Then let $\psi_{\theta}(f,\bar{x}):=\psi_{\theta_0}(f,\bar{x})\lor\psi_{\theta_1}(f,\bar{x})$. The claim directly follows from the induction hypothesis.

\item[(6)] Suppose that $\theta(\bar{v},\bar{x})=\existso y\theta_0(\bar{v},\bar{x}y)$. Then let $\psi_{\theta}(f,\bar{x}):=\exists y\psi_{\theta_0}(f,\bar{x}y)$.
Now
\begin{align*}
\mA_0\models_{\X(\bar{a}/\bar{x})}\existso y\theta_0(\bar{v},\bar{x}y)
\iff &\mA_0\models_{\X(\bar{a}b/\bar{x}y)}\theta_0(\bar{v},\bar{x}y) \text{ for some } b\in A_0\\
\iff &\mA\models\psi_{\theta_0}(f,\bar{x}y)(\bar{a}b/\bar{x}y) \text{ for some } b\in A_0\quad\text{(by the induction hypothesis)}\\
\iff &\mA\models\exists y\psi_{\theta_0}(f,\bar{x}y)(\bar{a}/\bar{x}).
\end{align*}

\item[(7)] Suppose that $\theta(\bar{v},\bar{x})=\forallo y\theta_0(\bar{v},\bar{x}y)$. Then let $\psi_{\theta}(f,\bar{x}):=\forall x\psi_{\theta_0}(f,\bar{x}y)$. This is similar to case (6).
\end{itemize}

\end{proof}

The next theorem shows that the converse does not hold in full generality. We will show that the scaling property of $\FOPTcondineq$, i.e. Proposition \ref{distrprop}, fails for $\FOrxsum$.
\begin{theorem}\label{notransl}
Let $f$ be a $k$-ary function symbol. There exists a sentence $\psi\in\FOrxsum[\tau_0\cup\{\leq\}\cup\{f\}]$ for which there is no formula $\phi_{\psi}(v_1,\dots ,v_k)\in\FOPTcondineq[\tau_0]$ such that for any $\mathbb{R}_{\geq 0}$-structure $\mA=(\mA_0,\mA_1,\{f_{\X}\})$ and any nonempty probabilistic team $\X$ 
over $\{v_1,\dots ,v_k\}$ 
\[
\mA_0\models_{\X}\phi_{\psi}\iff \mA\models\psi,
\]
where $f_{\X}\colon A_0^k\to\mathbb{R}_{\geq 0}$ is a function such that $f_{\X}(s(\bar{v}))=\X(s)$ for all $s\in X$.
\end{theorem}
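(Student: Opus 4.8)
The plan is to exploit the \emph{scaling invariance} of $\FOPTcondineq$. Proposition \ref{distrprop} (whose proof, as noted above, extends to $\FOPTcondineq$) says that a nonempty probabilistic team $\X$ satisfies exactly the same $\FOPTcondineq$-formulas as its normalization $\distr(\X)$; in particular $\X$ and the rescaled team $2\X$, defined by $(2\X)(s)=2\cdot\X(s)$, are $\FOPTcondineq$-indistinguishable. By contrast, since a team is encoded through $f_{\X}$ with $f_{\X}(s(\bar v))=\X(s)$, a sentence of $\FOrxsum$ has access to the \emph{absolute} weights and can tell $\X$ apart from $2\X$. So the first step is to write down such a scale-sensitive sentence. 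I would take
\[
\psi := \big(\SUM_{\bar u}(f(\bar u),\, u_1 = u_1)\big) \times \big(\SUM_{\bar u}(f(\bar u),\, u_1 = u_1)\big) \ \leq\ \SUM_{\bar u}(f(\bar u),\, u_1 = u_1),
\]
a sentence of $\FOrxsum[\tau_0\cup\{\leq\}\cup\{f\}]$, where $\bar u=(u_1,\dots,u_k)$ (if $k=0$, replace the $\SUM$-terms by $f()$). A one-line computation of the $\SUM$-term gives $[\SUM_{\bar u}(f(\bar u),u_1=u_1)]^{\mA}=\sum_{\bar a\in A_0^k}f_{\X}(\bar a)=\sum_{s\in X}\X(s)$, the total weight of $\X$, which I write $W_{\X}$; hence $\mA\models\psi$ iff $W_{\X}^{2}\leq W_{\X}$, i.e., since $W_{\X}\geq 0$, iff $W_{\X}\leq 1$.

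Next I would assume toward a contradiction that some $\phi_\psi(v_1,\dots,v_k)\in\FOPTcondineq[\tau_0]$ realizes the equivalence stated in the theorem for every $\mathbb{R}_{\geq 0}$-structure $\mA=(\mA_0,\mA_1,\{f_{\X}\})$ and every nonempty $\X$, and instantiate it at a suitable pair of teams. Fix any finite $\tau_0$-structure $\mA_0$ (for concreteness, one with a one-element domain), and let $\X$ be a probabilistic team over $\{v_1,\dots,v_k\}$ that is already a probability distribution — say, all weight $1$ on the single assignment — so that $W_{\X}=1$, $W_{2\X}=2$, and $\distr(2\X)=\X=\distr(\X)$. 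Put $\mathcal{B}=(\mA_0,\mA_1,\{f_{\X}\})$ and $\mathcal{C}=(\mA_0,\mA_1,\{f_{2\X}\})$. Then $\mathcal{B}\models\psi$ (as $1\leq 1$) but $\mathcal{C}\not\models\psi$ (as $4\not\leq 2$), so the assumed equivalence yields $\mA_0\models_{\X}\phi_\psi$ and $\mA_0\not\models_{2\X}\phi_\psi$. On the other hand, Proposition \ref{distrprop} gives
\[
\mA_0\models_{2\X}\phi_\psi \iff \mA_0\models_{\distr(2\X)}\phi_\psi \iff \mA_0\models_{\distr(\X)}\phi_\psi \iff \mA_0\models_{\X}\phi_\psi,
\]
a contradiction, which completes the argument.

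I do not expect a genuine obstacle here. The only step that requires a choice is isolating an $\FOrxsum$-property that (i) is not preserved under rescaling the weights and (ii) is syntactically and semantically simple enough that the value of the associated numerical term under the encoding $f_{\X}$ is transparent; ``total mass at most $1$'' meets both requirements. What remains is routine: verifying the value of the $\SUM$-term, and checking the side conditions for invoking Proposition \ref{distrprop} (both $\X$ and $2\X$ are nonempty, so $\distr$ is defined, and the two teams have the same normalization).
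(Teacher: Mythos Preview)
Your proof is correct and follows essentially the same approach as the paper: exploit the scaling invariance of $\FOPTcondineq$ (Proposition \ref{distrprop}) against a scale-sensitive $\FOrxsum$-sentence built from the total weight $W_{\X}=\sum_{s}\X(s)$. The only cosmetic difference is the choice of sentence---the paper uses $i_0\times i_0\leq \SUM_x(i_0,x=x)$, which unravels to $W_{\X}\leq |A_0|$, whereas you use $W_{\X}^2\leq W_{\X}$, i.e.\ $W_{\X}\leq 1$; your choice is in fact slightly simpler, and the contradiction via a team and its rescaling is the same.
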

\begin{proof}
Let $x,y_1,\cdots ,y_k$ be variables such that $k=\ar(f)$, $\bar{y}=(y_1,\cdots ,y_k)$, and $x\notin\Var(\bar{y})$. Define 
\[
i_0:=\SUM_{\bar{y}}(f(\bar{y}),y_1=y_1),
\]
and 
\[
i_1:=\SUM_x(i_0,x=x).
\]
Let $\psi:=i_0\times i_0\leq i_1$. We show that $\psi$ is as wanted. For a contradiction, suppose that there is an equivalent formula $\phi_{\psi}$. We notice that 
\begin{align*}
[i_0\times i_0]^{\mA}_s&=[\SUM_{\bar{y}}(f(\bar{y}),y_1=y_1)\times\SUM_{\bar{y}}(f(\bar{y}),y_1=y_1)]^{\mA}_s\\
&=[\SUM_{\bar{y}}(f(\bar{y}),y_1=y_1)]^{\mA}_s\cdot[\SUM_{\bar{y}}(f(\bar{y}),y_1=y_1)]^{\mA}_s\\
&=\sum_{\bar{b}\in A_0^k}f_{\X}(\bar{b})\cdot \sum_{\bar{b}\in A_0^k}f_{\X}(\bar{b})\\
&=\sum_{s}\X(s)\cdot\sum_{s}\X(s),
\end{align*}
and 
\begin{align*}
[i_1]^{\mA}_s &= [\SUM_x(i_0,x=x)]^{\mA}_s=\sum_{a\in A_0}[i_0]^{\mA}_{s(a/x)}=\sum_{a\in A_0}\sum_{\bar{b}\in A_0^k}f_{\X}(\bar{b})=|A_0|\cdot \sum_{s}\X(s).
\end{align*}
Now $\mA\models\psi$ if and only if $\sum_{s}\X(s)\cdot\sum_{s}\X(s)\leq |A_0|\cdot \sum_{s}\X(s)$. Since $\X$ is nonempty, we have $\sum_{s}\X(s)>0$, and therefore $\mA\models\psi$ iff $\sum_{s}\X(s)\leq |A_0|$.

Let $\X$ and $A_0$ be such that $\sum_{s}\X(s)>|A_0|$. Then $\mA\not\models\psi$, which implies that $\mA_0\not\models_{\X}\phi_{\psi}$. By Proposition \ref{distrprop}, we have $\mA_0\not\models_{\distr(\X)}\phi_{\psi}$. Let $\mA'=(\mA_0,\mA_1,\{f_{\distr(\X)}\})$. Then also $\mA'\not\models\psi$. But now $\sum_{s}\distr(\X)(s)=1\leq |A_0|$, which is a contradiction.
\end{proof}

\subsection{Equi-expressivity of $\FOPTinc$ and $\FOrsum$}\label{equi}

In this subsection, we show that the logics $\FOPTinc$ and $\FOrsum$ are equi-expressive on $\mathbb{R}_{\geq 0}$-structures. The first part, the translation from $\FOPTinc$ to $\FOrsum$, almost already follows from the result of the previous subsection:
\begin{theorem}\label{team2fo_frag}
Let $\phi(v_1,\dots ,v_k)$ be any $\FOPTinc[\tau_0]$-formula and $f$ a $k$-ary function symbol. Then there exists an $\FOrsum[\tau_0\cup\{\leq\}\cup\{f\}]$-sentence $\psi_{\phi}(f)$ such that for any $\mathbb{R}_{\geq 0}$-structure $\mA=(\mA_0,\mA_1,\{f_{\X}\})$ and any probabilistic team $\X$ over $\{v_1,\dots ,v_k\}$
\[
\mA_0\models_{\X}\phi(\bar{v})\iff \mA\models\psi_{\phi}(f),
\]
where $f_{\X}\colon A_0^k\to\mathbb{R}_{\geq 0}$ is a function such that $f_{\X}(s(\bar{v}))=\X(s)$ for all $s\in X$.
\end{theorem}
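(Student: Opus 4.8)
The plan is to re-run the induction from the proof of Theorem~\ref{team2fo}, restricted to the atoms and connectives of $\FOPTinc$, and to check in addition that every $\FOrxsum$-formula it produces in fact lies inside the fragment $\FOrsum$. As there, I would first invoke Proposition~\ref{varprop} to assume that $\phi(\bar v)$ has no bound occurrences of $\bar v$, and then prove by induction on $\theta$ that for every subformula $\theta(\bar v,\bar x)$ of $\phi(\bar v)$ there is an $\FOrsum[\tau_0\cup\{\leq\}\cup\{f\}]$-formula $\psi_\theta(f,\bar x)$ with $\mA_0\models_{\X(\bar a/\bar x)}\theta(\bar v,\bar x)\iff\mA\models\psi_\theta(f,\bar x)(\bar a/\bar x)$ for every $\bar a\in A_0^{|\bar x|}$; the desired sentence $\psi_\phi(f)$ is then the instance with $\theta=\phi$ and $\bar x$ empty.

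For $\theta$ of the forms $\wcn\theta_0$, $\theta_0\wedge\theta_1$ (when $\theta$ is not itself a $\delta$-formula), $\theta_0\vvee\theta_1$, $\existso y\,\theta_0$, $\forallo y\,\theta_0$, and for the atom $\delta$, I would take the same translations as in the proof of Theorem~\ref{team2fo}, namely $\neg\psi_{\theta_0}\lor\forall\bar u\,f(\bar u)=0$, $\psi_{\theta_0}\wedge\psi_{\theta_1}$, $\psi_{\theta_0}\lor\psi_{\theta_1}$, $\exists y\,\psi_{\theta_0}$, $\forall y\,\psi_{\theta_0}$, and $\forall\bar u\,(f(\bar u)=0\lor\delta(\bar u/\bar v,\bar x))$. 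Their correctness is verified exactly as in Theorem~\ref{team2fo}, so the only genuinely new point is fragment membership: $f(\bar u)=0$ is the $\FOrsum$-abbreviation fixed in Section~\ref{FOrxsum}, $\delta(\bar u/\bar v,\bar x)$ is a Boolean combination of atomic $\tau_0$-formulas and hence already an $\FOrsum$-formula of the shape $\gamma$, and $\FOrsum$ is closed under $\neg,\wedge,\lor,\exists y,\forall y$; so each $\psi_\theta$ lies in $\FOrsum$ as soon as the $\psi_{\theta_i}$ do.

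The one case that must be handled differently is the extended probabilistic inclusion atom $\theta(\bar v,\bar x)=\delta_0(\bar v,\bar x)\leq\delta_1(\bar v,\bar x)$. Routing it through the equivalence $\delta_0\leq\delta_1\equiv(\delta_0\mid x_1=x_1)\leq(\delta_1\mid x_1=x_1)$ and case~(2) of Theorem~\ref{team2fo} would introduce a product $\times$ and so leave the fragment, so instead I would set
\[
\psi_\theta(f,\bar x):=\SUM_{\bar u}(f(\bar u),\delta_0(\bar u/\bar v,\bar x))\leq\SUM_{\bar u}(f(\bar u),\delta_1(\bar u/\bar v,\bar x)),
\]
with $\bar u=(u_1,\dots,u_k)$ fresh and distinct. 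This is a legal $\FOrsum$-atom, since the summation tuple and the argument tuple of $f$ are both $\bar u$, so $\Var(\bar u)\subseteq\Var(\bar u)$ and $|\bar u|=k=\ar(f)$, and $\delta_0(\bar u/\bar v,\bar x),\delta_1(\bar u/\bar v,\bar x)$ are of the form $\gamma$. Correctness is a short computation: Proposition~\ref{lemmaprop} reduces $\mA_0\models_{\X(\bar a/\bar x)}\delta_0\leq\delta_1$ to $\mA_0\models_\X(\delta_0\leq\delta_1)_{(\bar a/\bar x)}$, which by the semantics of $\leq$ is $\sum_{s\in S_0}\X(s)\leq\sum_{s\in S_1}\X(s)$ with $S_i=\{s\in X\mid\mA_0\models_s{\delta_i}_{(\bar a/\bar x)}\}$; using $f_\X(s(\bar v))=\X(s)$ and reindexing the sums over $\bar b\in A_0^k$ turns this into $\sum_{\bar b\in B_0}f_\X(\bar b)\leq\sum_{\bar b\in B_1}f_\X(\bar b)$ with $B_i=\{\bar b\in A_0^k\mid\mA_0\models\delta_i(\bar b/\bar v,\bar a/\bar x)\}$, which by the semantics of $\SUM$ is exactly $\mA\models\psi_\theta(f,\bar x)(\bar a/\bar x)$. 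I do not expect a serious obstacle here, since the substance of the argument is inherited from Theorem~\ref{team2fo}; the only point demanding care is the systematic bookkeeping that no clause of the translation — in particular the extended-inclusion clause, where the $\times$-free direct translation above replaces case~(2) — ever leaves the restricted grammar of $\FOrsum$.
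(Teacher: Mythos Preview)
Your proposal is correct and matches the paper's approach essentially line for line: the paper also reuses the translation clauses from Theorem~\ref{team2fo}, observes that all of them (except the conditional probability inequality, which does not occur in $\FOPTinc$) already produce $\FOrsum$-formulas, and adds precisely the clause $\psi_\theta(f,\bar x):=\SUM_{\bar u}(f(\bar u),\delta_0(\bar u/\bar v,\bar x))\leq\SUM_{\bar u}(f(\bar u),\delta_1(\bar u/\bar v,\bar x))$ for the extended probabilistic inclusion atom, with the same correctness calculation via Proposition~\ref{lemmaprop}. Your explicit fragment-membership checks are slightly more detailed than the paper's one-line assertion, but the argument is the same.
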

\begin{proof}
It suffices to complement the proof of Theorem \ref{team2fo} with the case $\theta(\bar{v},\bar{x})=\delta_0(\bar{v},\bar{x})\leq\delta_1(\bar{v},\bar{x})$ since the translations of all subformulas, except for the conditional probability inequality,
are $\FOrsum[\tau_0\cup\{\leq\}\cup\{f\}]$-sentences.

Suppose that $\theta(\bar{v},\bar{x})=\delta_0(\bar{v},\bar{x})\leq\delta_1(\bar{v},\bar{x})$. Then let 
\[
\psi_{\theta}(f,\bar{x}):=\SUM_{\bar{u}}(f(\bar{u}),\delta_0(\bar{u}/\bar{v},\bar{x}))\leq \SUM_{\bar{u}}(f(\bar{u}),\delta_1(\bar{u}/\bar{v},\bar{x})).
\]

Now
\begin{align*}
\mA_0\models_{\X(\bar{a}/\bar{x})}\delta_0\leq\delta_1
\iff &\mA_0\models_{\X}(\delta_0\leq\delta_1)_{(\bar{a}/\bar{x})}\\
\iff &\sum_{s\in S_0}\X(s)\leq\sum_{s\in S_1}\X(s),\text{ where }S_i=\{s\in X\mid \mA_0\models_{s}{\delta_i}_{(\bar{a}/\bar{x})}\}\text{ for } i=0,1\\
\iff &\sum_{\bar{b}\in B_0}f_{\X}(\bar{b})\leq\sum_{\bar{b}\in B_1}f_{\X}(\bar{b}),\text{ where } B_i=\{\bar{b}\in A_0^k\mid \mA_0\models\delta_i(\bar{b}/\bar{v},\bar{a}/\bar{x})\}\text{ for } i=0,1\\
\iff &\mA\models(\SUM_{\bar{u}}(f(\bar{u}),\delta_0(\bar{u}/\bar{v},\bar{x}))\leq \SUM_{\bar{u}}(f(\bar{u}),\delta_1(\bar{u}/\bar{v},\bar{x})))(\bar{a}/\bar{x}).
\end{align*} 
\end{proof}

For the second part, the translation from $\FOrsum$ to $\FOPTinc$, we need the following lemma:
\begin{lem}\label{aggregate_sum_lem}
Every aggregate sum term of the logic $\FOrsum$ can be expressed by a term of the form
\[
\SUM_{\bar{u}}(f(\bar{u}),\delta(\bar{u},\bar{x})),
\]
where $\bar{u}=(u_1,\dots ,u_k)$, and $\delta$ is a disjunction-free and quantifier-free formula, i.e. $\delta\ddfn\lambda\mid\neg\delta\mid\delta\wedge\delta$. 
\end{lem}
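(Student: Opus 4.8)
The plan is to treat a single aggregate sum term and normalize it in two independent moves. Since aggregate sums do not nest in $\FOrsum$ — the first argument of $\SUM$ is always of the form $f(\bar y)$ — every aggregate sum term is literally of the shape $\SUM_{\bar x}(f(\bar y),\gamma)$ with $\gamma$ a propositional combination of atomic $\tau_0$-formulas, $\Var(\bar x)\subseteq\Var(\bar y)$, and $|\bar y|=k=\ar(f)$; so there is no induction on term structure to carry out, only on $\gamma$. The first move reshapes the guard: the sole obstruction to writing $\gamma$ in the grammar $\delta\ddfn\lambda\mid\neg\delta\mid\delta\wedge\delta$ is the connective $\lor$, which can be removed since $\{\neg,\wedge\}$ is propositionally functionally complete. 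By induction on $\gamma$ one produces an equivalent $\lor$-free $\gamma'$: atoms, $\neg$, and $\wedge$ are passed through unchanged, and $\gamma_1\lor\gamma_2$ is rewritten as $\neg(\neg\gamma_1'\wedge\neg\gamma_2')$ using the inductively obtained $\gamma_1',\gamma_2'$. As $\gamma'$ is quantifier-free, logically equivalent to $\gamma$, and has the same free variables, replacing $\gamma$ by $\gamma'$ does not change the denotation of the aggregate sum in any $\mathbb{R}_{\geq 0}$-structure under any assignment.

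The second move fixes the argument of $f$. Write $\bar y=(y_1,\dots,y_k)$, let $J=\{\,j\le k:y_j\notin\Var(\bar x)\,\}$ be the coordinates held fixed by the summation, and pick a fresh tuple $\bar u=(u_1,\dots,u_k)$ of distinct variables. Put
\[
\delta(\bar u,\bar x)\dfn\Big(\bigwedge_{j\in J}u_j=y_j\Big)\wedge\gamma'[\bar u/\bar y],
\]
where $\gamma'[\bar u/\bar y]$ substitutes $u_i$ for $y_i$ throughout. I would prove $\SUM_{\bar x}(f(\bar y),\gamma)\equiv\SUM_{\bar u}(f(\bar u),\delta(\bar u,\bar x))$ by comparing the two sums summand by summand at a fixed $\mathbb{R}_{\geq 0}$-structure $\mA$ and assignment $s$: an assignment $\bar a$ of values to $\bar x$ corresponds to the tuple $\bar c\dfn s(\bar a/\bar x)(\bar y)\in A_0^{k}$; this correspondence is injective because every variable of $\bar x$ occurs among the distinct entries of $\bar y$ and is recoverable from $\bar c$; the tuples arising as such $\bar c$ are exactly those satisfying the pinning conjuncts $\bigwedge_{j\in J}u_j=y_j$; on those tuples $\gamma'[\bar u/\bar y]$ holds iff $\mA_0\models_s\gamma(\bar a/\bar x)$; and the right-hand summand $f^{\mA}(\bar c)$ equals, by definition of $\bar c$, the left-hand summand $[f(\bar y)]^{\mA}_{s(\bar a/\bar x)}$. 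The case $\bar x=\varnothing$ (hence the $\SUM_{\varnothing}$ abbreviations) is just $J=\{1,\dots,k\}$ and is covered by the same computation.

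Composing the two moves produces $\SUM_{\bar u}(f(\bar u),\delta(\bar u,\bar x))$ with $\delta$ disjunction-free and quantifier-free, which is the claim; in the translation $\FOrsum\to\FOPTinc$ this is exactly what lets each aggregate sum be matched against a $\delta$ of Equation \ref{delta}. The only point that demands care is the summand-by-summand verification in the second move: one has to check that $\gamma'[\bar u/\bar y]$ together with the equalities $u_j=y_j$ for $j\in J$ reproduces $\gamma(\bar a/\bar x)$ on exactly the tuples $s(\bar a/\bar x)(\bar y)$ and on no others, so that no summand is lost, gained, or double-counted — and this is precisely where the distinctness of the entries of $\bar y$ and the inclusion $\Var(\bar x)\subseteq\Var(\bar y)$ are used.
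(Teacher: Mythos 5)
Your proof is correct and follows essentially the same route as the paper's: eliminate $\lor$ from the guard via $\neg(\neg\gamma_1\wedge\neg\gamma_2)$, then enlarge the summation tuple to cover all argument positions of $f$ by pinning the formerly free coordinates with equality conjuncts in the guard, verified by a summand-by-summand bijection. The only cosmetic difference is that you substitute fresh variables for \emph{all} of $\bar y$ and pin the non-summed ones, whereas the paper keeps the summed coordinates' names and only introduces fresh variables for the pinned block; both yield the same term up to renaming.
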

\begin{proof}
Consider an aggregate sum of the form 
\[
\SUM_{\bar{u}_0}(f(\bar{u}_0\bar{x}_0),\gamma(\bar{u}_0,\bar{x})),
\]
where $\bar{x}_0$ are among $\bar{x}$, and $\gamma$ may contain disjunctions. The sum can be expressed by the term
\[
\SUM_{\bar{u}_0\bar{u}_1}(f(\bar{u}_0\bar{u}_1),(\gamma^*(\bar{u}_0,\bar{x})\wedge \bar{u}_1=\bar{x}_0)),
\]
where $\gamma^*$ is the formula obtained from $\gamma$ by expressing each disjunction with negation and conjunction in the usual way, i.e. for example, formula $\gamma_0\lor\gamma_1$ is expressed as $\neg(\neg\gamma_0\wedge\neg\gamma_1)$.

To see this, notice that
\[
[\SUM_{\bar{u}_0}(f(\bar{u}_0\bar{x}_0),\gamma(\bar{u}_0,\bar{x}))]_s^{\mA}=\sum_{\bar{a}_0\in B_0}f^{\mA}(s(\bar{a}_0/\bar{u}_0)(\bar{u}_0\bar{x}_0)),
\]
where $B_{0}=\{\bar{a}_0\in A_0^{|\bar{u}_0|}\mid\mA_0\models_s\gamma(\bar{a}_0/\bar{u}_0)\}$, and
\[
[\SUM_{\bar{u}_0\bar{u}_1}(f(\bar{u}_0\bar{u}_1),\gamma^*\wedge\bar{u}_1=\bar{x}_0)]_s^{\mA}=\sum_{\bar{a}_0\bar{a}_1\in B_{01}}f^{\mA}(s(\bar{a}_0\bar{a}_1/\bar{u}_0\bar{u}_1)(\bar{u}_0\bar{u}_1)),
\]
where $B_{01}=\{\bar{a}_0\bar{a}_1\in A_0^{|\bar{u}_0\bar{u}_1|}\mid\mA_0\models_s(\gamma^*\wedge\bar{u}_1=\bar{x}_0)(\bar{a}_0\bar{a}_1/\bar{u}_0\bar{u}_1)\}$. We then have
\[
B_{01}=\{\bar{a}_0s(\bar{x}_0)\in A_0^{|\bar{u}_0\bar{u}_1|}\mid\mA_0\models_s\gamma(\bar{a}_0/\bar{u}_0)\},
\]
from which it follows that
\begin{align*}
[\SUM_{\bar{u}_0\bar{u}_1}(f(\bar{u}_0\bar{u}_1),\gamma^*\wedge\bar{u}_1=\bar{x}_0)]_s^{\mA}
&=\sum_{\bar{a}_0\bar{a}_1\in B_{01}}f^{\mA}(s(\bar{a}_0\bar{a}_1/\bar{u}_0\bar{u}_1)(\bar{u}_0\bar{u}_1))\\
&=\sum_{\bar{a}_0s(\bar{x}_0)\in B_{01}}f^{\mA}(s(\bar{a}_0s(\bar{x}_0)/\bar{u}_0\bar{u}_1)(\bar{u}_0\bar{u}_1))\\
&=\sum_{\bar{a}_0\in B_{0}}f^{\mA}(s(\bar{a}_0/\bar{u}_0)(\bar{u}_0\bar{x}_0))\\
&=[\SUM_{\bar{u}_0}(f(\bar{u}_0\bar{x}_0),\gamma)]_s^{\mA}.
\end{align*}
\end{proof}
\begin{theorem}\label{FO2team}
Let $\psi(f)$ be any $\FOrsum[\tau_0\cup\{\leq\}\cup\{f\}]$-sentence, where $f$ is a $k$-ary function symbol. Then there exists an $\FOPTinc[\tau_0]$-formula $\phi_{\psi}(v_1,\dots ,v_k)$ such that for any $\mathbb{R}_{\geq 0}$-structure $\mA=(\mA_0,\mA_1,\{f_{\X}\})$ and any nonempty  probabilistic team $\X$ over $\{v_1,\dots ,v_k\}$ 
\[
\mA_0\models_{\X}\phi_{\psi}(\bar{v})\iff \mA\models\psi(f),
\]
where $f_{\X}\colon A_0^k\to\mathbb{R}_{\geq 0}$ is a function such that $f_{\X}(s(\bar{v}))=\X(s)$ for all $s\in X$. 
\end{theorem}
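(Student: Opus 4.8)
The plan is to run an induction on the structure of $\psi$ that mirrors Theorem~\ref{team2fo} read backwards, using the normal form supplied by Lemma~\ref{aggregate_sum_lem}. First I would preprocess $\psi$: by Lemma~\ref{aggregate_sum_lem} I may assume that every aggregate sum occurring in $\psi$ has the shape $\SUM_{\bar u}(f(\bar u),\delta(\bar u,\bar x))$ with $\bar u=(u_1,\dots,u_k)$ and $\delta$ disjunction-free and quantifier-free; renaming bound variables (harmless since the quantifiers and aggregate sums of $\FOrsum$ are classical), I may further assume that none of the bound variables of $\psi$ — in particular none of the $u_i$ — belongs to $\{v_1,\dots,v_k\}$. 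I would then prove, by induction on subformulas, the following: for every subformula $\psi'$ of the normalized $\psi$ whose free first-order variables lie among a tuple $\bar x$, there is a $\FOPTinc[\tau_0]$-formula $\phi_{\psi'}(\bar v,\bar x)$ such that for every $\mathbb{R}_{\geq 0}$-structure $\mA=(\mA_0,\mA_1,\{f_{\X}\})$, every nonempty probabilistic team $\X$ over $\{v_1,\dots,v_k\}$, and every $\bar a\in A_0^{|\bar x|}$,
\[
\mA_0\models_{\X(\bar a/\bar x)}\phi_{\psi'}(\bar v,\bar x)\iff \mA\models\psi'(\bar a/\bar x).
\]
The theorem follows by taking $\bar x=\varnothing$ at the top level and putting $\phi_{\psi}:=\phi_{\psi}(\bar v)$.

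The atoms are the only interesting cases. For an atomic $\tau_0$-formula $\lambda$ with free variables among $\bar x$, I would take $\phi_{\lambda}:=\lambda$ regarded as a $\delta$; since $\lambda(\bar a/\bar x)$ is then a $\tau_0$-sentence and $\X(\bar a/\bar x)$ is nonempty, $\mA_0\models_{\X(\bar a/\bar x)}\lambda$ iff $\mA\models\lambda(\bar a/\bar x)$. For an atom $\SUM_{\bar u}(f(\bar u),\delta_0(\bar u,\bar x))\leq\SUM_{\bar u}(f(\bar u),\delta_1(\bar u,\bar x))$ I would take the extended probabilistic inclusion atom $\delta_0(\bar v/\bar u,\bar x)\leq\delta_1(\bar v/\bar u,\bar x)$, obtained by substituting the team variables $\bar v$ for the summation variables $\bar u$; this is a legal $\FOPTinc$-atom precisely because Lemma~\ref{aggregate_sum_lem} guarantees that $\delta_i$ is disjunction- and quantifier-free and that $\bar u$ is a full $k$-tuple. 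To verify it I would use Proposition~\ref{lemmaprop} to pass from $\X(\bar a/\bar x)$ to $\X$ with $\bar a$ substituted for $\bar x$; writing $S_i=\{s\in X\mid \mA_0\models_s\delta_i(\bar v/\bar u,\bar a/\bar x)\}$ and $B_i=\{\bar b\in A_0^k\mid \mA_0\models\delta_i(\bar b/\bar u,\bar a/\bar x)\}$, the bijection $s\mapsto s(\bar v)$ between the maximal team $X$ over $\{v_1,\dots,v_k\}$ and $A_0^k$, together with $f_{\X}(s(\bar v))=\X(s)$, yields
\[
\sum_{s\in S_i}\X(s)=\sum_{\bar b\in B_i}f_{\X}(\bar b),
\]
and the right-hand sums are exactly the values of the two aggregate sum terms of $\psi'(\bar a/\bar x)$, so both inequalities express the same condition.

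For the compound cases I would set $\phi_{\neg\theta}:=\wcn\phi_{\theta}$, $\phi_{\theta_0\wedge\theta_1}:=\phi_{\theta_0}\wedge\phi_{\theta_1}$, $\phi_{\theta_0\lor\theta_1}:=\phi_{\theta_0}\vvee\phi_{\theta_1}$, $\phi_{\exists x\theta}:=\existso x\phi_{\theta}$, and $\phi_{\forall x\theta}:=\forallo x\phi_{\theta}$. The point that makes each of these immediate from the induction hypothesis is that none of the operations $\wedge$, $\vvee$, $\wcn$, $\existso$, $\forallo$ ever turns a nonempty probabilistic team into an empty one: $\wedge$, $\vvee$, $\wcn$ leave the team unchanged, while $\X\mapsto\X(\bar a/\bar x)$ and $\X(\bar a/\bar x)\mapsto\X(\bar a/\bar x)(b/x)=\X(\bar ab/\bar xx)$ preserve the total weight. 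Hence every team reached in the induction is nonempty, so on the class of such teams $\wcn$ behaves as classical negation, $\vvee$ as classical disjunction, and $\existso,\forallo$ as the ordinary first-order quantifiers over $A_0$ — matching the Boolean connectives and quantifiers of $\FOrsum$ — and each case goes through exactly as in the corresponding case of Theorem~\ref{team2fo}.

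I expect the main obstacle to be the aggregate-sum atom: one has to be careful that Lemma~\ref{aggregate_sum_lem} has genuinely been applied so that each summation ranges over a full $k$-tuple (otherwise the substitution $\bar v/\bar u$ would not yield a well-formed $\FOPTinc$-atom and the sums over $A_0^k$ and over $X$ would not literally coincide), and to keep the parameter tuple $\bar x$ straight when invoking Proposition~\ref{lemmaprop}. Conceptually, though, everything rests on the single observation that over nonempty teams, and in the absence of any team-splitting connective, $\FOPTinc$ has a fully classical logical skeleton sitting on top of its two atoms; once this is in place the translation is essentially the identity on the logical structure of $\psi$.
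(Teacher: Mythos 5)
Your proposal is correct and follows essentially the same route as the paper: normalize the aggregate sums via Lemma~\ref{aggregate_sum_lem}, translate each sum-comparison atom into an extended probabilistic inclusion atom by substituting $\bar v$ for the full summation tuple $\bar u$, and exploit that over nonempty teams $\wcn$, $\vvee$, $\existso$, $\forallo$ behave exactly like the classical connectives and quantifiers. The only cosmetic difference is that the paper first brings $\psi$ into prenex normal form and peels off the quantifier prefix at the very end, whereas you handle the quantifiers inline in the structural induction; both succeed for the same reason, namely that the team modifications $\X\mapsto\X(\bar a/\bar x)$ preserve total weight and hence nonemptiness.
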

\begin{proof}
Without loss of generality, we may assume that $\psi(f)$ is in prenex normal form, i.e.
\[
\psi(f)=Q_1x_1\dots Q_nx_n\theta(f,\bar{x}),
\]
where $Q_i\in\{\exists,\forall\}$, $1\leq i\leq n$, and $\theta$ is quantifier free.

We then let $\phi_{\psi}(\bar{v}):=Q_1^1x_1\dots Q_n^1x_n\phi_{\theta}(\bar{v},\bar{x})$, where $\phi_{\theta}(\bar{v},\bar{x})$ is defined inductively as follows:
\begin{itemize}
\item[(1)] Suppose that $\theta(\bar{x})=\lambda(\bar{x})$, where $\lambda$ is a first-order atomic formula ($f$ does not appear in $\lambda$). Then let $\phi_{\theta}(\bar{v},\bar{x}):=\lambda(\bar{x})$.

Now
\begin{align*}
\mA_0\models_{\X(\bar{a}/\bar{x})}\lambda(\bar{x})
\iff &\mA_0\models_{\X}\lambda(\bar{x})_{(\bar{a}/\bar{x})} \quad(\text{by Prop. \ref{lemmaprop}})\\
\iff &\mA_0\models_{s}\lambda(\bar{x})_{(\bar{a}/\bar{x})}\text{ for all } s\in \supp(X)\\
\iff &\mA\models\lambda(\bar{a}/\bar{x})\quad \text{(}f \text{ does not appear in } \lambda\text{).}
\end{align*}

\item[(2)] By Lemma \ref{aggregate_sum_lem}, it suffices to consider the case 
\[
\theta(f,\bar{x})=\SUM_{\bar{u}}(f(\bar{u}),\delta_0(\bar{u},\bar{x}))\leq \SUM_{\bar{u}}(f(\bar{u}),\delta_1(\bar{u},\bar{x})),
\]
where $\delta$ is a disjunction-free and quantifier-free formula. Then let $\phi_{\theta}(\bar{v},\bar{x}):=\delta_0(\bar{v}/\bar{u},\bar{x})\leq \delta_1(\bar{v}/\bar{u},\bar{x})$. This is similar to the proof of Theorem \ref{team2fo_frag}.

\item[(3)] Suppose that $\theta(f,\bar{x})=\neg\theta_0(f,\bar{x})$. Then let $\phi_{\theta}(\bar{v},\bar{x}):=\wcn\phi_{\theta_0}(\bar{v},\bar{x})$.
Now
\begin{align*}
\mA_0\models_{\X(\bar{a}/\bar{x})}\wcn\phi_{\theta_0}
\iff &\mA_0\not\models_{\X(\bar{a}/\bar{x})}\phi_{\theta_0}(\bar{v},\bar{x})\quad\text{(} \X(\bar{a}/\bar{x}))\text{ is nonempty)}\\
\iff &\mA\not\models\theta_0(f,\bar{x})(\bar{a}/\bar{x}) \quad\text{(by the induction hypothesis)}\\
\iff &\mA\models\neg\theta_0(f,\bar{x})(\bar{a}/\bar{x}).
\end{align*}

\item[(4)] Suppose that $\theta(\bar{x})=\theta_0(\bar{x})\wedge\theta_0(\bar{x})$. Then let $\phi_{\theta}(\bar{v},\bar{x}):=\phi_{\theta_0}(\bar{v},\bar{x})\wedge\phi_{\theta_1}(\bar{v},\bar{x})$. The claim directly follows from the induction hypothesis.

\item[(5)] Suppose that $\theta(\bar{x})=\theta_0(\bar{x})\lor\theta_0(\bar{x})$. Then let $\phi_{\theta}(\bar{v},\bar{x}):=\phi_{\theta_0}(\bar{v},\bar{x})\vvee\phi_{\theta_1}(\bar{v},\bar{x})$. The claim directly follows from the induction hypothesis.

\end{itemize}
Now
\begin{align*}
\mA_0\models_{\X}Q_1^1x_1\dots Q_n^1x_n\phi_{\theta}(\bar{v},\bar{x})
&\iff Q_1a_1,\dots ,Q_na_n\in A_0, \text{ } \mA_0\models_{\X(\bar{a}/\bar{x})}\phi_{\theta}(\bar{v},\bar{x})\\
&\iff Q_1a_1,\dots ,Q_na_n\in A_0, \text{ } \mA\models\theta(\bar{a}/\bar{x})\\
&\iff \mA\models Q_1x_1\dots Q_nx_n\theta(\bar{x}).
\end{align*}
\end{proof}

By combining Theorems \ref{team2fo_frag} and \ref{FO2team}, we obtain that $\FOPTinc$ and $\FOrsum$ are equi-expressive on $\mathbb{R}_{\geq 0}$-structures.

\section{Translation from $\FOrxsum$ to $\logicFont{FFP}_{\mathbb{R}}$}\label{FFP}

In this section, we present a translation from $\FOrxsum$ to  a fragment of $\logicFont{FFP}_{\mathbb{R}}$. The logic $\logicFont{FFP}_{\mathbb{R}}$ was introduced in \cite{gradel95} as a logic for PTIME over reals  (w.r.t. ordered structures). It is a fixed point logic with constants for every real number. In the fragment that we consider, the constants are restricted to 0 and 1, and therefore the data complexity of the fragment corresponds to the class $\textsc{P}_{\mathbb{R}}^0$, i.e., the class of languages over $\mathbb{R}$ decidable in polynomial time by a BSS-machine with restriction to machine constants 0 and 1. The translation gives us an upper bound for the data complexity of $\FOrxsum$. We summarize those definitions from \cite{gradel95} which are needed for the translation; for further details on $\logicFont{FFP}_{\mathbb{R}}$, see \cite{gradel95}.

A two-sorted structure $\mA=(\mA_0,\mA_1,F)$ is called an $\mathbb{R}$-structure if
\[
\mA_1=\mathcal{R}:=(\mathbb{R},+,-,\times,/,\text{sign},=,<,0,1).
\]
We also denote $\tau_{\mathcal{R}}=\{+,-,\times,/,\text{sign},=,<,0,1\}$. In the following, we restrict to functional $\mathbb{R}$-structures or $\mathbb{R}$-\textit{algebras}. These are $\mathbb{R}$-structures $(\mA_0,\mathcal{R},F)$ such that structure $\mA_0$ is a plain set $A_0$, i.e. $\tau_0=\varnothing$. 

We consider a fragment of the \textit{functional fixed-point logic for} $\mathbb{R}$-\textit{algebras}, or  $\logicFont{FFP}_{\mathbb{R}}$. First-order $\tau_0$-terms are defined in the usual way. Note that since $\tau_0=\varnothing$, we only have variables as first-order terms. The fragment of $\logicFont{FFP}_{\mathbb{R}}$ over a vocabulary $\tau_0\cup\tau_{\mathcal{R}}\cup\sigma=\tau_{\mathcal{R}}\cup\sigma$ is the set of numerical terms, defined as follows:
\[
i\ddfn c\mid f(\bar{x})\mid i+i\mid i-i\mid i\times i\mid i/i\mid \text{sgn}(i)\mid \max_{\bar{x}}i(\bar{y})\mid \textbf{fp}[Z(\bar{z})\leftarrow i(Z,\bar{z})](\bar{y})
\]
where $c\in\{0,1\}$, $f$ and $Z$ are function symbols such that $f\in\sigma$ and $Z\notin\sigma$, $\bar{x},\bar{y},\bar{z}$ are tuples of distinct variables with $|\bar{x}|=\ar(f)$, $\Var(\bar{x})\subseteq\Var(\bar{y})$, and $|\bar{y}|=|\bar{z}|=\ar(Z)$. 

First-order terms are interpreted in the usual way. Intended interpretations for most of the numerical terms are clear. We give interpretations for the non-obvious ones: $\text{sgn}(i)$, $\max_{\bar{x}}i(\bar{y})$, and $\textbf{fp}[Z(\bar{z})\leftarrow i(Z,\bar{z})](\bar{y})$. We define
\begin{align*}
[\text{sgn}(i)]_s^{\mA}:=
\begin{cases}
1, \quad &\text{ when } [i]_s^{\mA}>0\\
0, \quad &\text{ when } [i]_s^{\mA}=0\\
-1, \quad &\text{ when } [i]_s^{\mA}<0,
\end{cases}
\end{align*}
and
\begin{align*}
[\max_{\bar{x}}i(\bar{y})]_s^{\mA}:=\max\{[i(\bar{y})]_{s(\bar{a}/\bar{x})}^{\mA} \mid \bar{a}\in A_0^{|\bar{x}|}\}.
\end{align*}
Because of the terms of the form $\textbf{fp}[Z(\bar{z})\leftarrow i(Z,\bar{z})](\bar{y})$,  we also allow partially defined functions $Z$ that map tuples from $A_0$ to $\mathbb{R}$. We define a \textit{partial} $\mathbb{R}$-\textit{algebra} as an $\mathbb{R}\cup\{\text{undef}\}$-algebra obtained by extending the basic operations on $\mathbb{R}$ as follows: 
if $[j]_s^{\mA}=\text{undef}$, then
\[
[i+j]_s^{\mA}=[i-j]_s^{\mA}=\text{undef},\qquad [\text{sign}(j)]_s^{\mA}=\text{undef},
\]
and
\[[i\times j]_s^{\mA}=[i/j]_s^{\mA}=
\begin{cases}
0, \quad &\text{when } [i]_s^{\mA}=0\\
\text{undef}, \quad &\text{when } [i]_s^{\mA}\neq 0.
\end{cases}
\]
Additionally, $[\max_{\bar{x}}i(\bar{y})]_s^{\mA}=\text{undef}$, when $[i(\bar{y})]_{s(\bar{a}/\bar{x})}^{\mA}=\text{undef}$ for some $\bar{a}\in A_0^{|\bar{x}|}$.

Let $i(Z,\bar{z})$ be a numerical term of vocabulary $\tau_{\mathcal{R}}\cup\{Z\}$. We write $[i(Z,\bar{z})]_s^{\mA,Z}$ for the interpretation of the term $i(Z,\bar{z})$ in the structure obtained from $\mA$ by adding a suitable partial function $Z\colon A_0^{\ar(Z)}\to\mathbb{R}$. The term $i(Z,\bar{z})$ induces an operator $F_i^{\mA}$ that updates partially defined functions $Z$ as follows:
\[
F_i^{\mA}Z(s(\bar{z}))=
\begin{cases}
[i(Z,\bar{z})]_s^{\mA,Z}, \quad &\text{when } Z(s(\bar{z}))=\text{undef}\\
Z(s(\bar{z})), \quad &\text{otherwise.}
\end{cases}   
\]
This defines a sequence of partial functions $Z^j\colon A_0^{\ar(Z)}\to\mathbb{R}$ such that
\begin{align*}
&Z^0(\bar{a})=\text{undef} \quad\text{ for all } \bar{a}\in A_0^{\ar(Z)}\\
&Z^{j+1}=F_i^{\mA}Z^{j}.
\end{align*}
Note that $Z^{j+1}=Z^j$ for some $j\leq |A_0|^{\ar(Z)}$, and after this $j$, any further iterations do not update the function. We call this $Z^j$ the \textit{fixed point of} $F_i^{\mA}$. We define
\[
[\textbf{fp}[Z(\bar{z})\leftarrow i(Z,\bar{z})](\bar{y})]_s^{\mA}=Z^{\infty}(s(\bar{y}))
\]
where $Z^{\infty}$ is the fixed point of $F_i^{\mA}Z$.

A function $E\colon A_0\to\mathbb{R}$ that is a bijection from $A_0$ to $\{0,\dots,|A_0|-1\}$ is called a \textit{ranking}. We say that a structure $\mA$ is \textit{ranked} if the set $F$ contains a ranking. A given ranking $E$ induces a ranking $E_k$ of $k$-tuples for any $k>0$. The ranking $E_k$ is definable, and we will use the abbreviation $\underline{x}$ for $E_k(\bar{x})$ where $\bar{x}$ is a $k$-tuple of first-order variables.

Let $\tau_0$ be a finite relational vocabulary, and $\mA_0$ a finite $\tau_0$-structure. We define the structure $\mA_0^*$ as the plain set $A_0$. We can make an $\mathbb{R}$-algebra $\mA^*=(\mA_0^*,\mathcal{R},F)$ of vocabulary $\tau_{\mathcal{R}}\cup\sigma$ by adding to $\sigma$ characteristic functions $\chi_R$ for all relation symbols $R\in\tau_0$. Let $\phi$ be a first-order formula of vocabulary $\tau_0$. Then the characteristic function of $\phi$, denoted by $\chi[\phi]$, is definable in $\logicFont{FFP}_{\mathbb{R}}[\tau_{\mathcal{R}}\cup\sigma]$. Moreover, if $i,j$ are numerical $\tau_\mathcal{R}\cup\sigma$-terms, then functions $\chi[i=j]$ and $\chi[i\leq j]$ are also definable in $\logicFont{FFP}_{\mathbb{R}}[\tau_{\mathcal{R}}\cup\sigma]$. (See \cite{gradel95} or the proof of Theorem \ref{toFFP} below.)

The next theorem shows that $\FOrxsum[\tau_0\cup\{\leq\}\cup\{f\}]$-formulas can be viewed as functions of $\logicFont{FFP}_{\mathbb{R}}$. Note that the corresponding $\logicFont{FFP}_{\mathbb{R}}$-term will be over $\tau_{\mathcal{R}}\cup\sigma$, a different vocabulary since in $\mathbb{R}$-algebras $\mA^*$ each relation $R^{\mA}\subseteq A_0^{\ar(R)}$ is replaced with its characteristic function $\chi_R\colon A_0^{\ar(R)}\to\mathbb{R}$.

\begin{theorem}\label{toFFP}
Let $\phi$ be any $\FOrxsum[\tau_0\cup\{\leq\}\cup\{f\}]$-formula, and let $\sigma$ be a vocabulary that contains function symbols $E$ and $f$, as well as $\chi_R$ for all relation symbols $R\in\tau_0$. Then there exists an $\logicFont{FFP}_{\mathbb{R}}[\tau_{\mathcal{R}}\cup\sigma]$-term $i_{\phi}$ such that for any $\mathbb{R}_{\geq 0}$-structure $\mA=(\mA_0,\mA_1,\{f^{\mA}\})$ and any assignment $s$
\[
\mA\models_s\phi \iff [i_{\phi}]_s^{\mA^*}=1
\]
where $\mA^*=(\mA_0^*,\mathcal{R},F)$ is an $\mathbb{R}$-algebra such that structure $\mA_0^*$ is the plain set $A_0$, and $F$ contains a ranking $E$, the function $f^{\mA}$, and the characteristic functions $\chi_R$ for all relations $R\in\tau_0$.
\end{theorem}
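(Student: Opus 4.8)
The plan is to prove this by a structural induction carried out in two layers: first translate every numerical $\FOrxsum$-term $i$ into an $\logicFont{FFP}_{\mathbb{R}}[\tau_{\mathcal{R}}\cup\sigma]$-term $\widehat{i}$ with $[\widehat{i}]_s^{\mA^*}=[i]_s^{\mA}$ for all $s$, and then, using these, define $i_{\phi}$ by induction on $\phi$ so that $[i_{\phi}]_s^{\mA^*}\in\{0,1\}$ always and equals $1$ exactly when $\mA\models_s\phi$. Throughout I would use the facts recalled just before the theorem: that $\chi[\varphi]$ is $\logicFont{FFP}_{\mathbb{R}}[\tau_{\mathcal{R}}\cup\sigma]$-definable for every first-order $\tau_0$-formula $\varphi$, that $\chi[i\leq j]$ is definable for numerical $\tau_{\mathcal{R}}\cup\sigma$-terms $i,j$, and that the ranking $E\in F$ yields a definable ranking $E_k$ of $k$-tuples, written $\underline{\bar{x}}$, together with the constants $0,1$ and the arithmetic needed to form ranks and their predecessors.

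At the term layer the cases $\widehat{f(\bar{y})}:=f(\bar{y})$ and $\widehat{i\times j}:=\widehat{i}\times\widehat{j}$ are immediate, and $\SUM_{\varnothing}(i,\gamma)$ translates to $\widehat{i}\cdot\chi[\gamma]$. The one substantial case is $\SUM_{\bar{x}}(i,\gamma)$ with $|\bar{x}|\geq 1$; write $\bar{w}$ for the tuple of variables free in this term. Since $\tau_0=\varnothing$ in $\mathbb{R}$-algebras, a numerical term may apply a new function symbol only to tuples of \emph{variables}, so ``$Z$ at the predecessor tuple'' cannot be written literally; I would instead dereference it through a guarded maximum. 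Take a fresh $(|\bar{w}|+|\bar{x}|)$-ary symbol $Z$ and a fresh variable tuple $\bar{x}'$, and put
\[
r(Z,\bar{w},\bar{x})\;:=\;\max_{\bar{x}'}\bigl(\chi[\underline{\bar{x}'}=\underline{\bar{x}}-1]\cdot Z(\bar{w},\bar{x}')\bigr)\;+\;\widehat{i}\cdot\chi[\gamma],
\]
where $\widehat{i}$ and $\chi[\gamma]$ are read with $\bar{x}$ free; the intended value of $Z(\bar{w},\bar{x})$ is the sum of $[i]$ over those $\bar{x}$-tuples of rank at most $\underline{\bar{x}}$ that satisfy $\gamma$. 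The fixed-point iteration realizes this staging: at a rank-$0$ tuple every coefficient $\chi[\underline{\bar{x}'}=-1]$ in the maximum vanishes, so (using $0\cdot\text{undef}=0$) the maximum is $0$ and $Z$ becomes defined at once; at a tuple of rank $m+1$ the term $r$ mentions $Z$ at the unique tuple of rank $m$, which contributes an undefined summand and so leaves $Z$ undefined there until the stage at which rank $m$ has been filled. Hence $Z$ is filled in increasing rank order and the fixed point $Z^{\infty}$ is total after at most $|A_0|^{\ar(Z)}$ steps, the bound guaranteed by the logic. Since every $\FOrxsum$-value over an $\mathbb{R}_{\geq 0}$-structure is nonnegative (hence, by the term induction, so is each $[\widehat{i}]_s^{\mA^*}$), the partial sums are nondecreasing in rank and the full sum is attained at the maximal-rank tuple, so I would set
\[
\widehat{\SUM_{\bar{x}}(i,\gamma)}\;:=\;\max_{\bar{x}}\bigl(\textbf{fp}[Z(\bar{w},\bar{x})\leftarrow r(Z,\bar{w},\bar{x})](\bar{w},\bar{x})\bigr),
\]
with $\bar{w}$ still free.

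At the formula layer I would take $i_{\lambda}:=\chi[\lambda]$ for atomic $\tau_0$-formulas, $i_{i\leq j}:=\chi[\widehat{i}\leq\widehat{j}]$, $i_{\neg\phi}:=1-i_{\phi}$, $i_{\phi\wedge\psi}:=i_{\phi}\times i_{\psi}$, $i_{\phi\lor\psi}:=\text{sgn}(i_{\phi}+i_{\psi})$, $i_{\exists x\phi}:=\max_{x}i_{\phi}$, and $i_{\forall x\phi}:=1-\max_{x}(1-i_{\phi})$. Each clause is checked directly against the two semantics, using that the values involved lie in $\{0,1\}$, that $A_0$ is finite and nonempty, and that no undefined value arises (the only fixed points occurring inside $i_{\phi}$ are the total ones built for $\SUM$). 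Composing the two layers yields the term $i_{\phi}$ the theorem asks for.

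I expect the main obstacle to be precisely the $\SUM$-case: one must verify that the guarded-maximum encoding of ``predecessor'' interacts correctly with the conventions of the partial $\mathbb{R}$-algebra, so that the fixed point is filled strictly in rank order rather than collapsing prematurely or failing to converge, that the built-in iteration bound of $\logicFont{FFP}_{\mathbb{R}}$ suffices for the fixed point to become total, and that the nonnegativity of $\FOrxsum$-values legitimizes extracting the sum via a maximum. Once this is settled, the remaining inductive steps are routine bookkeeping.
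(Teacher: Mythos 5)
Your proposal is correct and follows essentially the same route as the paper: a two-layer induction in which the aggregate sum is simulated by a ranked fixed-point computation, with a guarded $\max$ used to dereference $Z$ at the predecessor tuple (the paper guards with $\chi[\underline{x}=\underline{u}+1]$ and a separate rank-$0$ summand, you with $\chi[\underline{\bar{x}'}=\underline{\bar{x}}-1]$, which is a cosmetic difference), and the total sum extracted as $\max_{\bar{x}}$ of the fixed point using nonnegativity. The formula-layer clauses also match the paper's up to the equivalent choice of $\mathrm{sgn}(i_{\phi}+i_{\psi})$ versus $i_{\phi}+i_{\psi}-i_{\phi}\times i_{\psi}$ for disjunction.
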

\begin{proof}
We begin by showing how to translate any numerical $\FOrxsum$-term $i$ of vocabulary $\tau_0\cup\{\leq\}\cup\{f\}$. We denote by $i^*$ the translation which is a numerical $\logicFont{FFP}_{\mathbb{R}}$-term of vocabulary $\tau_{\mathcal{R}}\cup\sigma$.
\begin{itemize}
\item[(1)] If $i=f(\bar{x})$, then $i^*:=f(\bar{x})$.
\item[(2)] If $i=i_0\times i_1$, then $i^*:=i_0^*\times i_1^*$.
\item[(3)] If $i=\SUM_{\bar{x}}(i_0(\bar{y}),\gamma(\bar{y}))$ where $\Var(\bar{x})\subseteq\Var(\bar{y})$, then 
\[
i^*:=\max_{\bar{x}}\textbf{fp}[Z(\bar{y})\leftarrow j(Z,\bar{y})](\bar{y}),
\]
where 
\begin{align*}
j(Z,\bar{y})=&\chi[\underline{x}=0]\times i_0^*(\bar{y})\times\chi[\gamma(\bar{y})]\\
&+\max_{\bar{u}}\left(\chi[\underline{x}=\underline{u}+1]\times(Z(\bar{y}(\bar{u}/\bar{x}))+i_0^*(\bar{y})\times\chi[\gamma(\bar{y})])\right).
\end{align*} 
(In the above, $\bar{y}(\bar{u}/\bar{x})$ denotes the tuple obtained from $\bar{y}$ by replacing $\bar{x}$ with $\bar{u}$.)
\end{itemize}
We continue by defining the corresponding $\logicFont{FFP}_{\mathbb{R}}[\tau_{\mathcal{R}}\cup\sigma]$-terms for formulas $\phi$.
\begin{itemize}
\item[(4)] Let $\phi=\lambda$, where $\lambda$ is a first-order atomic formula of vocabulary $\tau_0$. Then $\lambda=R(\bar{x})$ for some $R\in\tau_0$. Now, we let $i_{\lambda}:=\chi_R(\bar{x})$. (Note that $R$ may be the equality relation, so this also covers the case $\lambda=x_0=x_1$.)
\item[(5)] If $\phi=i_0\leq i_1$, then 
\begin{align*}
i_{\phi}&:=\chi[i_0^*= i_1^*\vee i_0^*< i_1^*]\\
&=\chi[i_0^*= i_1^*]+\chi[i_0^*< i_1^*]-\chi[i_0^*= i_1^*]\times\chi[i_0^*< i_1^*]
\end{align*}
where 
\[
\chi[i_0^*= i_1^*]=1-[\text{sign}(i_0^*- i_1^*)]^2
\]
and
\[
\chi[i_0^*< i_1^*]=([\text{sign}(i_1^*- i_0^*)]^2+\text{sign}(i_1^*- i_0^*))/2.
\]
\item[(6)] If $\phi=\neg\theta_0$, then $i_{\phi}:=1-i_{\theta_0}$.
\item[(7)] If $\phi=\theta_0\wedge\theta_1$, then $i_{\phi}:=i_{\theta_0}\times i_{\theta_1}$.
\item[(8)] If $\phi=\theta_0\vee\theta_1$, then $i_{\phi}:=i_{\theta_0}+ i_{\theta_1}-i_{\theta_0}\times i_{\theta_1}$.
\item[(9)] If $\phi=\exists x\theta_0$, then $i_{\phi}:=\max_x i_{\theta_0}$.
\item[(10)] If $\phi=\forall x\theta_0$, then $i_{\phi}:=1-\max_x (1-i_{\theta_0})$.
\end{itemize}
\end{proof}

\section{Conclusion}
 We have defined new tractable logics for the framework of probabilistic team semantics that generalize the recently defined logic  $\FOT$  that is expressively complete for first-order team properties. Our logics employ new probabilistic atoms that resemble so-called extended atoms from the team semantics literature. We also defined counterparts of our logics over metafinite structures and showed that all of our logics can be translated into functional fixed point logic giving a deterministic polynomial-time upper bound for data complexity with respect to BSS-computations. 
 
 The following questions remain open:

\begin{itemize}
\item What is the exact data complexity  of our logics in the BSS-model?
\item Is it possible to axiomatize  (fragments) of our new logics?
\end{itemize}
Note that by Theorem \ref{val} the logic $\FOPTcondineq$ cannot be fully axiomatized but, e.g.,  several axiomatizations are know for mere probabilistic independence atoms (see \cite{CoranderHKPV16} for references).


\medskip

\bibliographystyle{plain}
\bibliography{biblio1,biblio2,biblio3}

\end{document}